\definecolor{mygreen}{HTML}{009901}
\definecolor{toc}{RGB}{13,55,174}	
\newtheorem{theorem}{Theorem}[section]
\newtheorem{lemma}[theorem]{Lemma}
\newtheorem{corollary}{Corollary}[section]
\newtheorem{mainLemma}[theorem]{Main Lemma}
\theoremstyle{definition}
\newtheorem{definition}[theorem]{Definition}
\newcommand{\e}{\varepsilon} 
\newcommand{\lp}{\left} 
\newcommand{\rp}{\right}
\renewcommand{\Pr}[2]{\textbf{Pr}_{#1}\lp[#2 \rp]}
\newcommand{\E}[2]{\mathbb{E}_{#1}{\lp[ #2\rp]}}
\newcommand{\reals}{\mathbb{R}}
\newcommand{\dist}{\mathcal{D}}
\newcommand{\ind}[1]{\mathbbm{1}{ \{ #1 \} }}
\newcommand{\opt}{\text{OPT}}
\newcommand{\scenarios}{\mathcal{S}}
\newcommand{\msscf}{\textsc{MSSC}_f}
\newcommand{\msscfc}{\textsc{MSSC}_f^c}
\newcommand{\umsscf}{\textsc{UMSSC}_f}
\newcommand{\umsscfc}{\textsc{UMSSC}_f^c}
\newcommand{\pbv}{\textsc{PB}}
\newcommand{\pb}{{\textsc{PB}}}
\newcommand{\pbT}{{\textsc{PB}_{\leq T}}}
\newcommand{\upbT}{{\textsc{UPB}_{\leq T}}}
\newcommand{\upbTc}{{\textsc{UPB}^c_{\leq T}}}
\newcommand{\pbvc}{{\textsc{PB}^c}}
\newcommand{\pbTc}{{\textsc{PB}_{\leq T}^c}}
\newcommand{\odt}{\textsc{DT}}
\newcommand{\odtc}{\textsc{DT}^c}
\newcommand{\upb}{{\textsc{UPB}}}
\newcommand{\udt}{\textsc{UDT}}
\newcommand{\udtc}{\textsc{UDT}^c}
\newcommand{\dsa}{{\mathcal{D}_{is_1}}}
\newcommand{\dsb}{{\mathcal{D}_{is_2}}}
\DeclareMathOperator{\ib}{IB}
\DeclareMathOperator{\DP}{DP}
\DeclareMathOperator{\ub}{UB}
\DeclareMathOperator{\nib}{NIB}
\DeclareMathOperator{\supp}{supp}
\DeclareMathOperator{\nat}{Nat}
\DeclareMathOperator{\sol}{Sol}
\DeclareMathOperator{\intv}{Int}
\newcommand{\iq}{{\mathcal{I}_q}}
\newcommand{\pbText}{\textsc{Pandora's Box}}
\newcommand{\dtText}{\textsc{Optimal Decision Tree}}
\newcommand{\udtText}{\textsc{Uniform Decision Tree}}
\newcommand{\msscText}{\textsc{Min Sum Set Cover with Feedback}}
\newcommand{\shuchiremove}[1]{ }
\newcommand{\cost}{\mathrm{cost}}
\title{Approximating Pandora's Box with Correlations\footnote{This work was funded in part by NSF awards CCF-2225259 and CCF-2217069}}
\date{}
\author{
Shuchi Chawla \\ UT-Austin \\ {\tt shuchi@cs.utexas.edu } \and 
Evangelia Gergatsouli \\ UW-Madison \\ {\tt evagerg@cs.wisc.edu} \and 
Jeremy McMahan \\ UW-Madison \\ {\tt jmcmahan@wisc.edu} \and 
Christos Tzamos \\ UW-Madison \&  University of Athens\\ {\tt tzamos@wisc.edu} 
}
\begin{document}

\maketitle

We revisit the classic Pandora’s Box ($\pbv$) problem under correlated distributions on the box values. Recent work of~\cite{ChawGergTengTzamZhan2020} obtained constant approximate algorithms for a restricted class of policies for the problem that visit boxes in a fixed order. In this work, we study the complexity of approximating the optimal policy which may adaptively choose which box to visit next based on the values seen so far.

Our main result establishes an approximation-preserving equivalence of $\pbv$ to the well studied Uniform Decision Tree ($\udt$) problem from stochastic optimization and a variant of the \textsc{Min-Sum Set Cover} ($\msscf$) problem. For distributions of support $m$, $\udt$ admits a $\log m$ approximation, and while a constant factor approximation in polynomial time is a long-standing open problem,  constant factor approximations are achievable in subexponential time~\cite{LiLianMuss2020}. Our main result implies that the same properties hold for $\pbv$ and $\msscf$.

We also study the case where the distribution over values is given more succinctly as a mixture of $m$ product distributions. This problem is again related to a noisy variant of the Optimal Decision Tree which is significantly more challenging. We give a constant-factor approximation
that runs in time $n^{ \tilde O( m^2/\e^2 ) }$ when the mixture components on every box are either identical or separated in TV distance by $\e$.

\setcounter{page}{0}
\thispagestyle{empty}
\newpage
\section{Introduction}\label{sec:intro}

Many everyday tasks involve making decisions under uncertainty; for example
driving to work using the fastest route or buying a house at the best price.
Although we don't know how the future outcomes of our current decisions will turn out, we can often use some prior information to
facilitate the decision making process. For example, having driven on the
possible routes to work before, we know which is usually the busiest
one. It is also common in such cases that we can remove part of the
uncertainty by paying some additional cost. This type of online decision making in the presence of costly information can be modeled as the so-called
\pbText{} problem, first formalized by Weitzman in \cite{Weit1979}. In this
problem, the algorithm is given $n$ alternatives called \emph{boxes}, each
containing a value from a known distribution. The exact value is not known, but
can be revealed at a known \emph{opening} cost specific to the box. The goal
of the algorithm is to decide which box to open next and whether to
select a value and stop, such that the total \emph{opening cost plus the
minimum value revealed} is minimized.  In the case of independent distributions
on the boxes' values, this problem has a very elegant and simple optimal solution, as
described by Weitzman \cite{Weit1979}: calculate
an index for each box\footnote{This is a special case of Gittins index
\cite{Gitt1972}.}, open the boxes in increasing order of index, and stop when the
expected gain is worse than the value already obtained.

Weitzman's model makes the crucial assumption that the distributions
on the values are independent across boxes. This, however, is not always the case
in practice and as it turns out, the simple algorithm of the
independent case fails to find the optimal solution under correlated
distributions. Generally, the complexity of the \pbText{} with
correlations is not yet well understood. \textbf{In this work we develop the first approximately-optimal policies for the \pbText{} problem with correlated values.}


We consider two standard models of correlation where the distribution over values can be specified explicitly in a succinct manner. In the first, the distribution over values has a small support of size $m$. In the second the distribution is a mixture of $m$ product distributions, each of which can be specified succinctly. We present approximations for both settings.

A primary challenge in approximating \pbText{} with correlations is that the optimal solution can be an adaptive policy that determines which box to open depending on the instantiations of values in all of the boxes opened previously. It is not clear that such a policy can even be described succinctly. Furthermore, the choice of which box to open is complicated by the need to balance two desiderata -- finding a low value box quickly versus learning information about the values in unopened boxes (a.k.a. the state of the world or realized scenario) quickly. Indeed, the value contained in a box can provide the algorithm with crucial information about other boxes, and inform the choice of which box to open next; an aspect that is completely missing in the independent values setting studied by Weitzman.

\subsubsection*{Contribution 1: Connection to Decision Tree and a general purpose approximation.}

Some aspects of the \pbText{} problem have been studied separately in other contexts. For example, in the \dtText{} problem ($\odt$)~\cite{GuilBilm2009, LiLianMuss2020}, the goal is to identify an unknown hypothesis, out of $m$ possible ones, by performing a sequence of costly tests, whose outcomes depend on the realized hypothesis. This problem has an informational structure similar to that in \pbText{}. In particular, we can think of every possible joint instantiation of values in boxes as a possible hypothesis, and every opening of a box as a test. The difference between the two problems is that while in \dtText{} we want to identify the realized hypothesis exactly, in \pbText{} it suffices to terminate the process as soon as we have found a low value box. 

Another closely related problem is the \textsc{Min Sum Set Cover}~\cite{FeigLovaTeta2004}, where boxes only have two kinds of values -- {\em acceptable} or {\em unacceptable} -- and the goal is to find an {\em acceptable} value as quickly as possible. A primary difference relative to \pbText{} is that unacceptable boxes provide no further information about the values in unopened boxes. 

One of the main contributions of our work is to unearth connections between \pbText{} and the two problems described above. We show that \pbText{} is essentially equivalent to a special case of \dtText{} (called \textsc{Uniform Decision Tree} or UDT) where the underlying distribution over hypotheses is uniform -- the approximation ratios of these two problems are related within log-log factors. 
Surprisingly, in contrast, the non-uniform $\odt$ appears to be harder than non-uniform \pbText{}.
We relate these two problems by showing that both are in turn related to a new version of \textsc{Min Sum Set Cover}, that we call \msscText{} ($\msscf$). These connections are summarized in Figure~\ref{fig:intro_summary}. 
We can thus build on the rich history and large collection of results on these problems to offer efficient algorithms for \pbText{}.
We obtain a polynomial time $\tilde{O}(\log m)$ approximation for \pbText{}, where $m$ is the number of distinct value vectors (a.k.a. scenarios) that may arise; as well as constant factor approximations in subexponential time. 
\begin{figure}[H]
    \centering
    \pgfmathsetmacro{\dist}{4}
\pgfmathsetmacro{\distV}{2}
\pgfmathsetmacro{\distVerBrace}{0.5}
\begin{tikzpicture}

	\node (pbv) at (0,0){{\LARGE $\pbv$}};
 
 
	\node (umsscf) at (\dist,-0.05){{\LARGE $\umsscf$}};

	\node[opacity=0] (dummyumsscf) at (\dist,0){{\LARGE $\umsscf$}};

	\node (udt) at (2*\dist,0){{\LARGE $\udt$}};

	 \draw[<->,thick] (pbv) to node [above] {Section \ref{sec:pb_to_mssc}} (dummyumsscf);

		 \draw[<->,thick] (udt) to node[above] {Section~\ref{sec:mssc_to_dt}}(dummyumsscf);

	 \draw[decorate,black,thick, decoration={brace,amplitude=6pt, mirror}] (0,-\distVerBrace) -- node[below, yshift=-0.2cm]{Log-log factors} (\dist-0.2,-\distVerBrace);
  
    \draw[decorate,black,thick, decoration={brace,amplitude=6pt, mirror}] (\dist+0.2,-\distVerBrace) -- node[below, yshift=-0.2cm]{Constant factors} (2*\dist,-\distVerBrace);
\end{tikzpicture}
    \caption{A summary of our approximation preserving reductions 
    }
    \label{fig:intro_summary}
\end{figure}
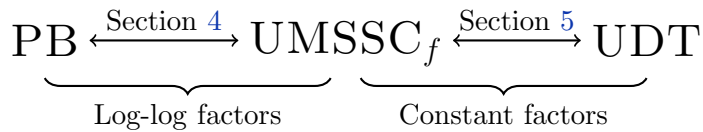
It is an important open question whether constant factor approximations exist for \udtText{}: the best known lower-bound on the approximation ratio is $4$ while it is known that it is not NP-hard to obtain super-constant approximations under the Exponential Time Hypothesis. The same properties transfer also to \pbText{} and \msscText{}. Pinning down the tight approximation ratio for any of these problems will directly answer these questions for any other problem in the equivalence class we establish.

The key technical component in our reductions is to find an appropriate stopping rule for \pbText{}: after opening a few boxes, how should the algorithm determine whether a small enough value has been found or whether further exploration is necessary? We develop an iterative algorithm that in each phase finds an appropriate threshold, with the exploration terminating as soon as a value smaller than the threshold is found, such that there is a constant probability of stopping in each phase. Within each phase then the exploration problem can be solved via a reduction to $\udt$. The challenge is in defining the stopping thresholds in a manner that allows us to relate the algorithm's total cost to that of the optimal policy.

\subsubsection*{Contribution 2: Approximation for the mixture of distributions model.}

Having established the general purpose reductions between \pbText{} and $\odt$, we turn to the mixture of product distributions model of correlation. This special case of \pbText{} interpolates between Weitzman's independent values setting and the fully general correlated values setting. In this setting, we use the term ``scenario" to denote the different product distributions in the mixture. The information gathering component of the problem is now about determining which product distribution in the mixture the box values are realized from. Once the algorithm has determined the realized scenario (a.k.a. product distribution), the remaining problem amounts to implementing Weitzman's strategy for that scenario.

We observe that this model of correlation for \pbText{} is 
related to the noisy version of $\odt$, where the results of some tests for a given realized hypothesis are not deterministic. One challenge for $\odt$ in this setting is that any individual test may give us very little information distinguishing different scenarios, and one needs to combine information across sequences of many tests in order to isolate scenarios. This challenge is inherited by \pbText{}. 

Previous work on noisy $\odt$ obtained algorithms whose
approximations and runtimes depend on the amount of noise. In contrast, we consider settings where the level of noise is arbitrary, but where the mixtures satisfy a {\em separability assumption}. In particular, we assume that for any given box, if we consider the marginal distributions of the value in the box under different scenarios, these distributions are either identical or sufficiently different (e.g., at least $\e$ in TV distance) across different scenarios. Under this assumption, we design a constant-factor approximation for \pbText{} that runs
in $n^{\tilde{O}(m^2/\e^2)}$ (Theorem~\ref{thm:dp}), where $n$ is the number of
boxes.
The formal result and the algorithm is presented in Section~\ref{sec:mixt}.




\shuchiremove{
That is,  the value distributions are arbitrarily correlated and the policies under consideration are not restricted in any way but are allowed to exploit the feedback obtained in previous steps arbitrarily. We unravel a deep connection of this problem with two other well studied problems in approximation algorithms and stochastic optimization, namely \dtText{} ($\odt$) and a variant of \msscText{} ($\msscf$),  which allows us to obtain the first non-trivial approximation algorithms for the general \pbText{} problem.
\begin{itemize}
      \item 
  In \dtText, we are asked to identify an
unknown hypothesis, out of $m$ possible ones, by performing a sequence of
tests. Each test has a cost and, if chosen, reveals a result, which depends on
the hypothesis realized. The goal of the algorithm is to minimize the
expected cost of tests performed in order to identify the correct
hypothesis given an initial prior over the set of hypotheses. The special case where this prior is uniform corresponds to \udtText. This problem has found many
applications in medical diagnosis~\cite{PodgKokoStigRozm}, fault
diagnosis~\cite{PattDont1992}, and active learning~\cite{GuilBilm2009,GoloKrau2010}. 
\item 
In \textsc{Min Sum Set Cover}~\cite{FeigLovaTeta2004}, there is a known ground set of elements and an unknown set drawn from some known probability distribution over a collection of sets. The goal of the algorithm is to sequentially select elements at a cost so that the expected cost paid until the set is hit, i.e. contains the chosen element, is minimized. In the \msscText{} that we consider, every element chosen provides feedback which depends on the realized set. This enables the algorithm to choose the next element adaptively.
  \end{itemize}

  Our main result shows how to readily apply algorithms for the two problems above to obtain approximation algorithms for \pbText{} with a small loss in the approximation ratio. We can thus build on the rich history and large collection of results on these problems to offer efficient algorithms for \pbText{}. 
}

\shuchiremove{
A more detailed overview of our results is given in the next section.

\subsection{Our Results}
Our work establishes an equivalence between \pbText{}, \udtText{} and \msscText{}. We achieve this through a series of approximation preserving reductions  summarized in Figure~\ref{fig:intro_summary}.

\begin{figure}[H]
    \centering
    \pgfmathsetmacro{\dist}{4}
\pgfmathsetmacro{\distV}{2}
\pgfmathsetmacro{\distVerBrace}{0.5}
\begin{tikzpicture}

	\node (pbv) at (0,0){{\LARGE $\pbv$}};
 
 
	\node (umsscf) at (\dist,-0.05){{\LARGE $\umsscf$}};

	\node[opacity=0] (dummyumsscf) at (\dist,0){{\LARGE $\umsscf$}};

	\node (udt) at (2*\dist,0){{\LARGE $\udt$}};

	 \draw[<->,thick] (pbv) to node [above] {Section \ref{sec:pb_to_mssc}} (dummyumsscf);

		 \draw[<->,thick] (udt) to node[above] {Section~\ref{sec:mssc_to_dt}}(dummyumsscf);

	 \draw[decorate,black,thick, decoration={brace,amplitude=6pt, mirror}] (0,-\distVerBrace) -- node[below, yshift=-0.2cm]{Log-log factors} (\dist-0.2,-\distVerBrace);
  
    \draw[decorate,black,thick, decoration={brace,amplitude=6pt, mirror}] (\dist+0.2,-\distVerBrace) -- node[below, yshift=-0.2cm]{Constant factors} (2*\dist,-\distVerBrace);
\end{tikzpicture}
    \caption{High level summary of results \textcolor{red}{this is a new figure lmk if you like it}}
    \label{fig:intro_summary}
\end{figure}



This equivalence leads to a sequence of corollaries on \pbText{} by exploiting known results for \dtText{} \cite{GuilBilm2009, LiLianMuss2020} and \msscText{} \cite{FeigLovaTeta2004}. In particular, we show that one can compute in polynomial time an approximately-optimal policy for \pbText{} whose approximation ratio is logarithmic in the support of the distribution of values. Moreover, we also show how to obtain constant factor approximations in subexponential time. 

It is an important open question whether constant factor approximations exist for \udtText{}: the best known lower-bound on the approximation ratio is 4 while it is known that it is not NP-hard to obtain super-constant approximations under the Exponential Time Hypothesis. The same properties transfer also to \pbText{} and \msscText{}. Pinning down the tight approximation ratio for any of this problems will directly answer these questions for any other problem in the equivalence class we established.

Finally, through our equivalence we are able to recover the results of ~\cite{ChawGergTengTzamZhan2020} which studied the versions of correlated Pandora's box with limited adaptivity through the direct connection with the standard \textsc{Min Sum Set Cover} (without feedback) that we establish. This allows us to address even the case where boxes have arbitrary costs, left open by the work of ~\cite{ChawGergTengTzamZhan2020}. The results and implications are outlined in more detail in Section~\ref{sec:roadmap}.

We also study the case where we are given a mixture of $m$ product distributions, instead of explicitly given distributions. In this case \pbText{} is
related to the noisy version of the optimal decision tree. The noise can be
arbitrary and we only require the mixtures to satisfy a \emph{separability}
condition; the marginals should either be identical or differ by at least $\e$ in TV distance. Using
this property, we design a constant-factor approximation for $\pbT$ that runs
in $n^{\tilde{O}(m^2/\e^2)}$ (Theorem~\ref{thm:dp}), where $n$ is the number of
boxes, which also implies a constant-factor for the initial \pbText{}
problem with values (Corollary~\ref{cor:mixt}), when using the tool of
Theorem~\ref{thm:pbv_to_pb0_log}. The formal result and the algorithm is presented in Section~\ref{sec:mixt}.
}
 \subsection{Related work}
	The \pbText{} problem was first introduced by Weitzman in the Economics literature
\cite{Weit1979}. Since then, there has been a long line of research studying
\pbText{} and its many variants ; non-obligatory inspection~\cite{Dova2018,BeyhKlein2019,BeyhCai2022,FuLiLiu2022}, with order constraints~\cite{HazoAumaKrauSarn2013,BoodFuscLazoLeon2020},
with correlation~\cite{ChawGergTengTzamZhan2020,GergTzam2023},
with combinatorial costs~\cite{BergEzraFeld2023},
competitive information design~\cite{DingFengHoXu2023},
delegated version~\cite{BechDughPate2022},
and finally in an online setting~\cite{EsfaHajiLuciMitz2019}. 
Multiple works also study the generalized setting where more information can be obtained for a
price~\cite{CharFagiGuruKleiRaghSaha2002, GuptKuma2001,
ChenJavdKarbBagnSrinKrau2015, ChenHassKarbKrau2015} and in settings
with more complex combinatorial constraints~\cite{Sing2018,
GoelGuhaMuna2006, GuptNaga2013, AdamSvirWard2016, GuptNagaSing2016,
GuptNagaSing2017, GuptJianSing2019}.

Chawla et al. \cite{ChawGergTengTzamZhan2020} were the first to study \pbText{} with correlated values, but they designed approximations relative to a simpler benchmark, namely the optimal performance achievable using a so-called {\em Partially Adaptive} strategy that cannot adapt the order in which it opens boxes to the values revealed. In general, optimal strategies can decide both the ordering of the boxes and the stopping time based on the values revealed. \cite{ChawGergTengTzamZhan2020} designed an algorithm with performance no more than a constant factor worse than the optimal Partially Adaptive strategy. 

In \textsc{Min Sum Set Cover} the line of work was initiated by \cite{FeigLovaTeta2004}, and continued with improvements and generalizations to more complex constraints by~\cite{AzarGamzYin2009,
MunaBabuMotwWido2005, BansGuptRavi2010, SkutWill2011}.

Optimal decision tree is an old problem studied in a variety of settings (\cite{PodgKokoStigRozm,PattDont1992,GuilBilm2009,GoloKrau2010}), while
its most notable application is in active learning settings. It was proven to
be NP-Hard by Hyafil and Rivest \cite{HyafRive1976}. Since then the problem of
finding the best approximation algorithm was an active one
\cite{GareGrah1974,Love1985,KosaPrzyBorg1999,Dasg2004, ChakPandRoyAwasMoha2011,
ChakPandRoySabh2009,GuilBilm2009,GuptNagaRavi2017,CicaJacoLabeMoli2010,AdleHeer2011},
where finally a greedy $\log m$ for the general case was given by
\cite{GuilBilm2009}. This approximation ratio is proven to be the best possible
\cite{ChakPandRoyAwasMoha2011}. For the case of Uniform decision tree less is
known, until recently the best algorithm was the same as the optimal decision
tree, and the lower bound was $4$ \cite{ChakPandRoyAwasMoha2011}. The recent
work of Li et al. \cite{LiLianMuss2020} showed that there is an algorithm
strictly better than $\log m$ for the uniform decision tree.

The noisy version of optimal decision tree was first studied in
\cite{GoloKrau2010}\footnote{This result is based on a result from
\cite{GoloKrau2011} which turned out to be wrong \cite{NanSali2017}. The
correct results are presented in \cite{GoloKrau2017correct}}, which gave an
algorithm with runtime that depends exponentially on the number of noisy
outcomes. Subsequently, Jia et al. in
\cite{JiaNagaNaviRavi2019} gave an $(\min(r,h) + \log m)$-approximation
algorithm, where $r$ (resp. $h$) is the maximum number of different test
results per test (resp. scenario) using a reduction to Adaptive Submodular
Ranking problem \cite{KambNagaNavi2017}. In the case of large number of
noisy outcome they obtain a $\log m$ approximation exploiting the connection to
Stochastic Set Cover \cite{LiuPartRangYang2008,ImNagaZwaa2016}.

\section{Preliminaries}\label{sec:prelims}
In this paper we study the connections between three different sequential decision making problems -- \dtText, \pbText{}, and \textsc{Min Sum Set Cover}. We describe these problems formally below.

\subsubsection*{Optimal Decision Tree}
In the \dtText{} problem (denoted $\odt$) we are given a set $\scenarios$ of $m$ scenarios
$s\in \scenarios$, each occurring with (known) probability $p_s$; and $n$ tests $\mathcal{T}=\{T_i\}_{i\in [n]}$, each with cost $1$. Nature picks a
scenario $s\in \scenarios$ from the distribution $p$ but this scenario is unknown to the algorithm. The goal of the algorithm is to determine which scenario is realized by running a subset of the tests $\mathcal{T}$. When test $T_i$ is run and the realized scenario is $s$, the test returns a result $T_i(s) \in \reals$.

\paragraph{Output.} The output of the algorithm is a decision tree where at each node there is a
test that is performed, and the branches are the outcomes of the test. In each
of the leaves there is an individual scenario that is the only one consistent
with the results of the test in the unique path from the root to this leaf. Observe that there is a single leaf corresponding to each scenario $s$. We can represent the tree as an \emph{adaptive policy} defined as follows:

\begin{definition}[Adaptive Policy $\pi$]
An adaptive policy $\pi: \cup_{X\subseteq \mathcal{T}} \reals^{X} \rightarrow \mathcal{T}$ is a function that given a set of tests done so far and their results, returns the next test to be	performed. 
\end{definition}

\paragraph{Objective.} For a given decision tree or policy $\pi$, let $\cost_s(\pi)$ denote the total cost of all of the tests on the unique path in the tree from the root to the leaf labeled with scenario $s$. The objective of the algorithm is to find a policy $\pi$ that minimizes the average cost $\sum_{s\in
\scenarios} p_s \cost_s(\pi)$.

\noindent
We use the term \udtText{} ($\udt$) to denote the special case of the problem where $p_s = 1/m$  for all scenarios $s$.

\subsubsection*{Pandora's Box}
In the \pbText{} problem we are given $n$ boxes, each with cost $c_i\geq 0$ and value $v_i$. The values $\{v_i\}_{i\in [n]}$ are distributed according to known distribution $\mathcal{D}$. We assume that $\mathcal{D}$ is an arbitrary correlated distribution over vectors $\{v_i\}_{i\in [n]}\in\reals^n$. We call vectors of values {\em scenarios} and use $s=\{v_i\}_{i\in [n]}$ to denote a possible realization of the scenario. As in $\odt$, nature picks a scenario from the distribution $D$ and this realization is a priori unknown to the algorithm. The goal of the algorithm is to pick a box of small value. The algorithm can observe the values realized in the boxes by opening any box $i$ at its respective costs $c_i$.

\paragraph{Output.}The output of the algorithm is an adaptive policy $\pi$ for opening boxes and a stopping condition. The policy $\pi$ takes as input a subset of the boxes and their associated values, and either returns the index of a box $i\in [n]$ to be opened next or stops and selects the minimum value seen so far. That is, $\pi:\cup_{X\subseteq [n]} \reals^X\rightarrow [n]\cup\{\perp\}$ where $\perp$ denotes stopping.

\paragraph{Objective.} For a given policy $\pi$, let $\pi(s)$ denote the set of boxes opened by the policy prior to stopping when the realized scenario is $s$. The objective of the algorithm is to minimize the expected cost of the boxes opened plus the minimum value discovered, where the expectation is
taken over all possible realizations of the values in each box.\footnote{In the original version of the problem studied by Weitzman~\cite{Weit1979} the values are independent across boxes, and the goal is to maximize the value collected minus the costs paid, in contrast to the minimization version we study here.}  Formally the objective is given by
\[
	\E{s\sim \mathcal{D}}{\min_{i\in \pi(s)} v_{is} + \sum_{i\in\pi(s)} c_i},
\]

For simplicity of presentation, from now on we assume that $c_i=1$ for all boxes, but we show in Section~\ref{apn:general_costs} how to adapt our results to handle non-unit costs, without any loss in the approximation factors.

We use $\upb$ to denote the special case of the problem where the distribution $\mathcal{D}$ is uniform over $m$ scenarios.

\subsubsection*{Min Sum Set Cover with Feedback}
In Min Sum Set Cover, we are given $n$ elements and a collection of $m$ sets $\scenarios$ over them, and a distribution $\mathcal{D}$ over the sets. The output of the algorithm is an ordering $\pi$ over the elements. The cost of the ordering for a particular set $s\in\scenarios$ is the index of the first element in the ordering that belongs to the set $s$, that is, $\cost_s(\pi)=\min\{i:\pi(i)\in s\}$. The goal of the algorithm is to minimize the expected cost $\E{s\sim\mathcal{D}}{\cost_s(\pi)}$.

We define a variant of the Min Sum Set Cover problem, called \msscText{} ($\msscf$). As in the original problem, we are given a set of $n$ elements, a collection of $m$ sets $\scenarios$ and a distribution $\dist$ over the sets. Nature instantiates a set $s\in\scenarios$ from the distribution $\dist$; the realization is unknown to the algorithm. Furthermore, in this variant, each element provides {\em feedback} to the algorithm when the algorithm "visits" this element; this feedback takes on the value $f_i(s) \in \reals$ for element $i\in [n]$ if the realized set is $s\in \scenarios$. 

\paragraph{Output.} The algorithm once again produces an ordering $\pi$ over the elements. Observe that the feedback allows the algorithm to adapt its ordering to previously observed values. Accordingly, $\pi$ is an adaptive policy that maps a subset of the elements and their associated feedback, to the index of another element $i\in [n]$. That is, $\pi:\cup_{X\subseteq [n]} \reals^X\rightarrow [n]$.

\paragraph{Objective.} As before, the cost of the ordering for a particular set $s\in\scenarios$ is the index of the first element in the ordering that belongs to the set $s$, that is, $\cost_s(\pi)=\min\{i:\pi(i)\in s\}$. The goal of the algorithm is to minimize the expected cost $\E{s\sim\mathcal{D}}{\cost_s(\pi)}$.

\subsubsection*{Commonalities and notation} As the reader has observed, we capture the commonalities between the different problems through the use of similar notation. Scenarios in $\odt$ correspond to value vectors in $\pb$ and to sets in $\msscf$; all are denoted by $s$, lie in the set $\scenarios$, and are drawn by nature from a known joint distribution $\dist$. Tests in $\odt$ correspond to boxes in $\pb$ and elements in $\msscf$; we index each by $i\in [n]$. The algorithm for each problem produces an adaptive ordering $\pi$ over these tests/boxes/elements. Test outcomes $T_i(s)$ in $\odt$ correspond to box values $v_{i}(s)$ in $\pb$ and feedback $f_i(s)$ in $\msscf$. We will use the terminology and notation across different problems interchangeably in the rest of the paper.

\subsection{Modeling Correlation}
	In this work we study two general ways of modeling the correlation between the
	values in the boxes. 
	\paragraph{Explicit Distributions.} In this case, $\mathcal{D}$ is a
	distribution over $m$ \emph{scenarios} where the $j$'th scenario is
	realized with probability $p_j$, for $j\in[m] $.  Every scenario
	corresponds to a fixed and known vector of values contained in each box.
	Specifically, box $i$ has value $v_{ij}\in \mathbb{R}^+ \cup \{ \infty \}$
	for scenario $j$. 

	\paragraph{Mixture of Distributions.} We also consider a more general
	setting, where $\mathcal{D}$ is a mixture of $m$
	product distributions. Specifically, each scenario $j$ is a product
	distribution; instead of giving a deterministic value for every box $i$,
	the result is drawn from distribution $\mathcal{D}_{ij}$. This setting is a
	generalization of the explicit distributions setting described before.

\section{Roadmap of the Reductions and Implications}\label{sec:roadmap}
In Figure~\ref{fig:summary}, we give an overview of all the main technical reductions shown in Sections~\ref{sec:pb_to_mssc} and \ref{sec:mssc_to_dt}.
An arrow $A\rightarrow B$ means that we gave an approximation preserving reduction from problem $A$ to problem $B$. 
Therefore an algorithm for $B$ that achieves approximation ratio $\alpha$ gives also an algorithm for $A$ with approximation ratio $O(\alpha)$ (or $O(\alpha \log \alpha)$ in the case of black dashed lines). For the exact guarantees we refer to the formal statement of the respective theorem. The gray lines denote less important claims or trivial reductions (e.g. in the case of $A$ being a subproblem of $B$).

\begin{figure}[H]
    \centering
    \newcommand{\bl}[1]{\textcolor{black}{#1}}
\pgfmathsetmacro{\dist}{3}
\pgfmathsetmacro{\distV}{1.9}
\pgfmathsetmacro{\offset}{1}
\pgfmathsetmacro{\ypomnimaX}{6}
\pgfmathsetmacro{\ypomnimaY}{1.8}
\begin{tikzpicture}

	\node (pbv) at (0,0){{\Large $\pbv$}};

	\node (umsscf) at (2*\offset,2*\distV){{\Large $\umsscf$}};
 	\node (msscf) at (-2*\offset,2*\distV){{\Large $\msscf$}};

	\node (udt) at (2*\offset, 3*\distV){{\Large $\udt$}};
 	\node (odt) at (-2*\offset, 4*\distV){{\Large $\odt$}};
	\draw[->,dotted, gray, thick, opacity=0.6] (umsscf) -- (msscf);

\draw[->,dotted, gray, thick, opacity=0.6] (udt) --  (odt);

    \draw[->, dashed, thick, gray] (msscf) to [out=180, in=180]  node [above, left] {Claim~\ref{cl:msscf_to_pb}}  (pbv);
 
    \draw[->,dashed, thick, gray] (msscf) -- node [above, left] {Claim~\ref{thm:msscf_to_odt}} (odt); 
    
    \draw[->, dashed, thick, gray] (umsscf) to [out=135,in=180]  node [above, left] {Claim~\ref{thm:msscf_to_odt}}  (udt);

    \draw[->,ultra thick, dashed] (pbv) -- node [above, right, text width=2cm] {\textbf{Thm}~\ref{thm:pb_to_mssc}}(umsscf);
        
    \draw[->,very thick] (udt) to [in=45, out=0] node [above, right] {\textbf{Thm}~\ref{thm:udt_to_pbT}}  (umsscf); 
    
     \draw[->,ultra thick, dashed] (pbv) -- node [label={[xshift=-0.85cm, yshift=-0.45cm]\textbf{Thm}~\ref{thm:pb_to_mssc}}] {} (msscf);

    \draw[->,very thick, dashed] (\ypomnimaX-1, \ypomnimaY+0.8) -- node[above] {{\footnotesize Main Theorem ($\log$ factors)}} (\ypomnimaX+1, \ypomnimaY+0.8);
    
        \draw[->,thick] (\ypomnimaX-1, \ypomnimaY) -- node[above] {{\footnotesize Main Theorem (const. factors)}} (\ypomnimaX+1, \ypomnimaY);
        
    \draw[->,dashed,thick, gray] (\ypomnimaX-1, \ypomnimaY-0.8) -- node[above] {{\footnotesize Minor Claim}} (\ypomnimaX+1, \ypomnimaY-0.8);
    
	  \draw[->,dotted, gray, thick, opacity=0.6] (\ypomnimaX-1, \ypomnimaY-1.6) -- node[above] {{\footnotesize Subproblem}} (\ypomnimaX+1, \ypomnimaY-1.6);

\end{tikzpicture}
    \caption{Summary of all our reductions. Bold black lines denote our main theorems, gray dashed are minor claims, and dotted lines are trivial reductions.}
    \label{fig:summary}
\end{figure}

\subsection{Approximating  \pbText{}}
Given our reductions and 
using the best known results for \udtText{} from~\cite{LiLianMuss2020} we immediately obtain efficient approximation algorithms for \pbText. We repeat the results of~\cite{LiLianMuss2020} below.

\begin{theorem}[Theorems 3.1 and 3.2 from \cite{LiLianMuss2020}]\label{thm:old_thm}
\,
\begin{itemize}
\item 
There is a $O(\log m/\log \mathrm{OPT})$-approximation algorithm for $\udt{}$ that runs in polynomial time, where $\opt$ is the cost of the optimal solution of the $\udt$ instance. 
\item There is a $\frac{9+\e}{\alpha}$-approximation algorithm for $\udt{}$ that runs in time $n^{\tilde{O}(m^\alpha)}$ for any $\alpha \in (0,1)$. 
\end{itemize}
\end{theorem}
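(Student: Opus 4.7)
The theorem is attributed directly to Li, Liang, and Mussmann [LiLianMuss2020], so the present paper does not reprove it; it is cited in order to combine with the reductions established in the previous sections. Nonetheless, here is how I would plan to prove each part from scratch.

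For the polynomial-time $O(\log m)$ bound, the natural approach is a greedy adaptive strategy: at each node of the decision tree, having seen a history of outcomes consistent with a subset $S \subseteq \scenarios$ of scenarios, pick the test $T_i$ that minimizes $\max_r |\{ s \in S : T_i(s) = r \}|$, i.e.\ the test whose worst outcome leaves the fewest scenarios alive. To prove the guarantee, I would charge the cost paid along each scenario's root-to-leaf path against its depth in the optimal tree and amortize using the potential $\log |S|$: at every step, the greedy decreases $\log |S|$ by a constant fraction of what any single test can guarantee, so each scenario incurs an overhead of $O(\log m)$ relative to the optimal policy. This is essentially the adaptive-submodular-ranking argument of Guillory–Bilmes, simplified by the fact that under a uniform prior the relevant set function depends only on cardinalities of consistent subsets.

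For the subexponential-time bound, the key idea is that with a budget of $n^{\tilde O(m^\alpha)}$ we can afford to enumerate a large initial portion of the tree. More concretely, I would enumerate over all ``top'' decision trees of size $\tilde O(m^\alpha)$ (there are only $n^{\tilde O(m^\alpha)}$ such objects), and for each leaf of such a top tree, whose consistent scenario set has size at most $m^{1-\alpha}$, invoke the polynomial-time greedy of the first bullet to obtain an $O(\log m^{1-\alpha}) = O((1-\alpha)\log m)$-approximate continuation. I would then select the best overall tree via dynamic programming over the enumerated skeletons.

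The main obstacle is turning this into a $(9+\e)/\alpha$ bound rather than an $O(\log m)$ bound: naively, the $\log$ factor from the greedy on the subinstances would survive. To remove it, one must exploit a structural fact about $\udt$: the optimal policy itself must pay at least $\Omega(\log m)$ in expected depth to distinguish among $m$ uniformly likely scenarios, so the cost of the enumerated top tree can be charged against $\opt$ at rate $1/\alpha$, while the remaining subtrees, each over only $m^{1-\alpha}$ scenarios, cost at most a constant times $\opt$ conditional on reaching them. Balancing these two charges — and verifying that the enumeration of top trees can be done so that the optimum tree's own prefix is one of the candidates — is where I expect the bulk of the work to lie; plugging in the right potential and tuning the threshold $m^\alpha$ should then give exactly the advertised $(9+\e)/\alpha$ ratio.
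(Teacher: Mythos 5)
You correctly observe that the paper does not prove this statement: it is Theorem~3.1 in the text precisely because it is imported verbatim from Li, Liang, and Mussmann~\cite{LiLianMuss2020}, and its sole role here is to be composed with the reductions of Sections~\ref{sec:pb_to_mssc}--\ref{subsec:udt_to_pbT} to yield Corollary~\ref{cor:odt_tilde_logm}. There is therefore no ``paper's own proof'' to compare your sketch against, and no review of fidelity is possible.

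That said, a brief remark on your sketch as a standalone plan: the first bullet is essentially sound in spirit (greedy splitting with a logarithmic potential is how the $O(\log m)$ bound is classically obtained, going back to Kosaraju et al.\ and Guillory--Bilmes; under a uniform prior one can indeed work with cardinalities rather than masses). The second bullet is where your sketch is substantially thinner than what is actually needed. Your proposed ``enumerate all top trees of size $\tilde O(m^\alpha)$ and greedily finish'' still leaves a $\log(m^{1-\alpha})$ factor dangling, and you acknowledge this; but the fix is not merely a charging argument against an $\Omega(\log m)$ lower bound on $\opt$ --- such a lower bound is false in general (e.g., a single test can isolate all scenarios, giving $\opt=1$). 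The actual argument in~\cite{LiLianMuss2020} hinges on a dichotomy between instances where a small set of tests shatters the scenario space (handled by enumeration) and instances where no such set exists (where greedy is shown to be near-optimal, not merely logarithmically so), together with a careful recursion; the constant~$9$ and the $1/\alpha$ dependence fall out of that case analysis, not from potential balancing. If you intend to actually reprove the theorem, that structural dichotomy is the missing idea.
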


Using the results of
Theorem~\ref{thm:old_thm} combined with Theorem~\ref{thm:pb_to_mssc} and Claim~\ref{thm:msscf_to_odt} we get the following corollary.

\begin{corollary}\label{cor:odt_tilde_logm}
  From the best-known results for $\udt$, we have that
\begin{itemize}
\item 
There is a $\tilde O(\log m)$-approximation algorithm for $\pbv$ that runs in polynomial time\footnote{If additionally the possible number of outcomes is a constant $K$, this gives a $O(\log m)$ approximation without losing an extra logarithmic factor, since $\opt \geq \log_K m$, as observed by~\cite{LiLianMuss2020}.}.
\item There is a $\tilde{O}(1/\alpha)$-approximation algorithm for $\pbv$ that runs in time $n^{\tilde{O}(m^\alpha)}$ for any $\alpha \in (0,1)$. 
\end{itemize}
\end{corollary}

An immediate implication of the above corollary is that it is not NP-hard to obtain a superconstant  approximation for $\pbv$, formally stated below.  
\begin{corollary}
    It is not NP-hard to achieve any superconstant approximation for $\pbv$ assuming the Exponential Time Hypothesis.
\end{corollary}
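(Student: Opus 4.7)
The plan is a straightforward contradiction argument that lifts Corollary~\ref{cor:odt_tilde_logm} against ETH. Suppose, for contradiction, that there exists a function $f$ with $f(m) \to \infty$ as $m \to \infty$ such that computing an $f(m)$-approximation to $\pbv$ is NP-hard. Then there is a polynomial-time reduction from $3$SAT on $n$ variables to a gapped instance of $\pbv$ of total size $N = \mathrm{poly}(n)$, whose support size $m$ is at most $N$.

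The next step is to invoke Corollary~\ref{cor:odt_tilde_logm} with a parameter $\alpha = \alpha(m) \in (0,1)$ chosen so that $\tilde{O}(1/\alpha(m)) \le f(m)$ and simultaneously $\alpha(m) \to 0$ as $m \to \infty$. Because $f(m) = \omega(1)$, such a choice exists: one may take, for instance, $\alpha(m)$ to be $1/f(m)$ up to a polylogarithmic correction that absorbs the hidden factors inside the $\tilde{O}$. The corollary then supplies an algorithm that produces an $f(m)$-approximation to $\pbv$ in time $n^{\tilde{O}(m^{\alpha(m)})}$. Since $\alpha(m) = o(1)$, the exponent $m^{\alpha(m)}$ is subpolynomial in $m$, and hence subpolynomial in $N$, so the overall running time is at most $2^{N^{o(1)}}$.

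Composing this algorithm with the hypothesised reduction yields an algorithm for $3$SAT on $n$ variables running in time $2^{\mathrm{poly}(n)^{o(1)}} = 2^{n^{o(1)}}$, which contradicts ETH. The only technical care needed is the parameter balancing: $\alpha(m)$ must be small enough to drive the approximation ratio below the assumed hardness threshold $f(m)$, yet large enough that $m^{\alpha(m)}$ remains subpolynomial in $N$. Since $f(m) = \omega(1)$, both constraints admit substantial slack, so no genuine obstacle arises; the proof reduces to one clean instantiation of the corollary together with the standard ETH lift.
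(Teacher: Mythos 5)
Your argument is correct and is precisely what the paper has in mind when it calls this corollary an immediate implication of Corollary~\ref{cor:odt_tilde_logm}: pick $\alpha(m)\approx 1/f(m)$ (up to polylogarithmic corrections) so that the $\tilde O(1/\alpha)$-approximation ratio is at most $f(m)$ while the running time $n^{\tilde O(m^\alpha)}$ remains $2^{N^{o(1)}}$ in the instance size $N$, then compose with the hypothesized polynomial-time reduction from $3$SAT to get a $2^{n^{o(1)}}$ algorithm contradicting ETH. One small slip in your closing paragraph: the two constraints on $\alpha(m)$ are phrased with their directions swapped --- to make $\tilde O(1/\alpha(m))\le f(m)$ you need $\alpha(m)$ \emph{not too small}, while to keep $m^{\alpha(m)}$ subpolynomial you need $\alpha(m)$ \emph{not too large} --- but your actual choice $\alpha(m)\approx 1/f(m)\to 0$ satisfies both, so the proof stands.
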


Observe that the logarithmic approximation achieved in Corollary~\ref{cor:odt_tilde_logm} loses a $\log \log m$ factor (hence the $\tilde{O}$) as it relies on the more complex reduction of Theorem~\ref{thm:pb_to_mssc}. If we choose to use the more direct reduction of Theorem~\ref{thm:pbv_to_pb0} to the \dtText{} where the tests have non-unit costs (which also admits a $O(\log m)$-approximation~\cite{GuptNagaRavi2017, KambNagaNavi2017}), we get the following corollary.
\begin{corollary}\label{cor:odt_logm}
    There exists an efficient algorithm that is $O(\log m)$-approximate
    for \pbText{} and with or without unit-cost boxes.
\end{corollary}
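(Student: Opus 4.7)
The plan is a direct composition of two ingredients, one already developed inside the paper and one cited from prior work, so essentially no new technical content is required for this corollary.

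First, I would invoke Theorem~\ref{thm:pbv_to_pb0}, which provides an approximation-preserving reduction from $\pbv$ (and from its non-uniform-cost variant $\pbvc$) directly to the Optimal Decision Tree problem $\odtc$, losing only a constant factor. This is the \emph{more direct} counterpart of Theorem~\ref{thm:pbv_to_pb0_log}: the latter routes through the truncated problem $\pbT$ in order to handle value-scale issues via a logarithmic net of thresholds, which is exactly where the extra $\log\log m$ factor inside the $\tilde O$ of Corollary~\ref{cor:odt_tilde_logm} comes from. For the present corollary we deliberately avoid that path so as to land at $O(\log m)$ rather than $\tilde O(\log m)$.

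Second, I would plug in the known $O(\log m)$-approximation algorithm for $\odtc$ from~\cite{GuptNagaRavi2017, KambNagaNavi2017}. Composition is then routine: for an input instance of $\pbv$ or $\pbvc$, the reduction of Theorem~\ref{thm:pbv_to_pb0} produces a polynomial-size $\odtc$ instance; we run the $O(\log m)$-approximation on it; and we translate the resulting decision tree back to a Pandora policy via the reverse mapping furnished by the reduction. Since both components run in polynomial time and the approximation factors compose multiplicatively, the overall algorithm is efficient and $O(\log m)$-approximate for both $\pbv$ and $\pbvc$.

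The main technical content is not in this corollary itself but in Theorem~\ref{thm:pbv_to_pb0}: that reduction must absorb the continuous value scale of Pandora's Box into $\odtc$ test costs and hypothesis outcomes without incurring a logarithmic blow-up, and must also translate an approximate decision tree back into a Pandora policy with only constant loss. Once that theorem is taken as a black box, the present statement is immediate.
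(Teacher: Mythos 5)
Your proposal follows essentially the same route as the paper and correctly identifies why this corollary improves $\tilde O(\log m)$ to $O(\log m)$: by using Theorem~\ref{thm:pbv_to_pb0} in place of Theorem~\ref{thm:pbv_to_pb0_log}, the $\alpha \log \alpha$ loss is replaced by a constant-factor loss, and composing with the $O(\log m)$-approximation for $\odtc$ from \cite{GuptNagaRavi2017, KambNagaNavi2017} gives the result. One imprecision worth flagging: Theorem~\ref{thm:pbv_to_pb0} does not reduce $\pbv$ directly to $\odtc$; it reduces $\pbvc$ to $\pbTc$ (with a factor-$2$ loss and a box blow-up to $n\cdot|\supp(v)|$). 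To land on $\odtc$ one still composes with the further constant-loss reductions $\pbTc \to \msscfc \to \odtc$ (the analogues of Theorems~\ref{thm:upbT_to_umsscf} and~\ref{thm:msscf_to_odt} for the non-uniform-cost, non-uniform-probability case, as indicated in Figure~\ref{fig:summary}). The paper's own prose is similarly terse here, and since the overall chain and the final $O(\log m)$ bound are what matter, your argument is sound; just state the full chain explicitly rather than attributing it all to the single Theorem~\ref{thm:pbv_to_pb0}.
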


%



\subsection{Constant approximation for Partially Adaptive $\pbv$}
Moving on, we show how our reduction can be used to obtain and improve the results of~\cite{ChawGergTengTzamZhan2020}. Recall that in ~\cite{ChawGergTengTzamZhan2020} the authors presented a constant factor approximation algorithm against a Partially Adaptive benchmark where the order of opening boxes must be fixed up front.

In such a case, the reduction of Section~\ref{sec:pb_to_mssc} can be used to reduce $\pbv$ to the standard \textsc{Min Sum Set Cover} (i.e. without feedback), which admits a 4-approximation~\cite{FeigLovaTeta2004}.

\begin{corollary}\label{cor:focs_pa}
    There exists a polynomial time algorithm for $\pbv$ that is $O(1)$-competitive against the partially adaptive benchmark.
\end{corollary}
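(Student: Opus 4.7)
The plan is to instantiate the reduction of Section~\ref{sec:pb_to_mssc} that maps $\pbv$ to $\msscf$, and then observe that when attention is restricted to partially adaptive policies---which fix the order of boxes upfront and only adapt the stopping time to the values revealed---the feedback component of the $\msscf$ instance is never actually exploited. This collapses the reduction into a standard \textsc{Min Sum Set Cover} instance (no feedback), enabling a direct application of the Feige--Lov\'asz--Tetali 4-approximation.

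First, I would recall the structure of the reduction: each scenario $s\in\scenarios$ becomes a set $S_s$ in an MSSC instance whose ground set is built from the boxes (and, if needed, thresholds drawn from the support of $\mathcal{D}$), with the property that $S_s$ becomes covered precisely when the partial sequence of probes made so far contains enough information for the optimal stopping rule under scenario $s$ to halt. The cost contribution of scenario $s$ to $\pbv$ (opening cost plus minimum value revealed) is then captured, up to a constant factor, by the cover time of $S_s$ under the chosen probe sequence.

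Second, I would argue that because a partially adaptive policy is described by a single ordering of boxes (together with an optional stopping rule), the sequence of elements probed does not depend on the realized scenario. Hence the reduction outputs a standard MSSC instance, with the optimal ordering lower-bounding the optimal partially adaptive $\pbv$ cost and, conversely, any MSSC ordering lifting back to a partially adaptive $\pbv$ policy with the natural optimal stopping rule of matching expected cost up to a constant. Feeding this into FLT, in its weighted form that handles arbitrary element costs, yields the claimed $O(1)$-approximation; because the weighted variant already accommodates non-unit element costs, we simultaneously resolve the arbitrary-cost case left open by~\cite{ChawGergTengTzamZhan2020}, giving the result for both $\pbv$ and $\pbvc$.

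The main obstacle will be verifying that the ``feedback-free slice'' of the Section~\ref{sec:pb_to_mssc} reduction aligns cleanly with weighted MSSC: specifically, that the accounting of opening costs plus the minimum-value term survives the reduction without any adaptivity-dependent terms, and that the partially adaptive benchmark cost is indeed captured (not merely upper-bounded) by the optimal ordering of the resulting MSSC instance. Once this correspondence is established, the constant factor follows immediately from~\cite{FeigLovaTeta2004} and its weighted extensions.
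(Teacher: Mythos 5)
Your proposal matches the paper's argument: restrict to partially adaptive policies, observe that the $\msscf$ instance produced by the Section~\ref{sec:pb_to_mssc} chain ($\pbv \to \pbT \to \msscf$) never exploits feedback once the box order is fixed, so it collapses to standard \textsc{Min Sum Set Cover}, and invoke the known $4$-approximation. One small correction: the $4$-approximation for MSSC with \emph{arbitrary element costs} is due to Munagala, Babu, Motwani, and Widom~\cite{MunaBabuMotwWido2005}, not to a ``weighted form'' of Feige--Lov\'asz--Tetali~\cite{FeigLovaTeta2004}, which covers the unit-cost case; the paper cites both accordingly.
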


The same result applies even in the case of non-uniform opening costs. This is because a 4-approximate algorithm for \textsc{Min Sum Set Cover} is known even when elements have arbitrary costs~\cite{MunaBabuMotwWido2005}. The case of non-uniform opening costs has also been considered for \pbText{} by \cite{ChawGergTengTzamZhan2020} but only provide an algorithm to handle polynomially bounded opening costs.



\section{Connecting \pbText{} and $\msscf$}\label{sec:pb_to_mssc}

In this section we establish the connection between  \pbText{} and \msscText{}. We show that the two problems are equivalent up to logarithmic factors in approximation ratio. 

One direction of this equivalence is easy to see in fact: \msscText{} is a special case of \pbText{}. Note that in both problems we examine boxes/elements in an adaptive order. In $\pb$ we stop when we find a sufficiently small value; in $\msscf$ we stop when we find an element that belongs to the instantiated scenario. To establish a formal connection,  given an instance of $\msscf$, we can define the "value" of each element $i$ in scenario $s$ as being $0$ if the element belongs to the set $s$ and as being $L+f_i(s)$ for some sufficiently large value $L$ where $f_i(s)$ is the feedback of element $i$ for set $s$. This places the instance within the framework of $\pb$ and a $\pb$ algorithm can be used to solve it. We formally describe this reduction in Section~\ref{sec:apn_tool} of the Appendix. 

\begin{restatable}{claim}{msscfToPb}\label{cl:msscf_to_pb}
	If there exists an $\alpha(n,m)$-approximation algorithm for $\pb$ then there exists a $\alpha(n,m)$-approximation for $\msscf$.
\end{restatable}

The more interesting direction is a reduction from $\pb$ to $\msscf$. In fact we show that a general instance of $\pb$ can be reduced to the simpler {\em uniform} version of \msscText{}. We devote the rest of this section to proving the following theorem.

\begin{theorem}[\pbText{} to $\msscf$] \label{thm:pb_to_mssc}
If there exists an $a(n,m)$ approximation algorithm for $\umsscf$ then there exists a $O(\alpha(n+m, m^2) \log \alpha(n+m, m^2))$-approximation for $\pb$.
\end{theorem}

\subsubsection*{Guessing a stopping rule and an intermediate problem}
The feedback structure in $\pb$ and $\msscf$ is quite similar, and the main component in which the two problems differ is the stopping condition. In $\msscf$, an algorithm can stop examining elements as soon as it finds one that "covers" the realized set. In $\pb$, when the algorithm observes a value in a box, it is not immediately apparent whether the value is small enough to stop or whether the algorithm should probe further, especially if the scenario is not fully identified. The key to relating the two problems is to "guess" an appropriate stopping condition for $\pb$, namely an appropriate threshold $T$ such that as soon as the algorithm observes a value smaller than this threshold, it stops. We say that the realized scenario is "covered".

To formalize this approach, we introduce an intermediate problem called \emph{\pbText{} with costly outside option $T$} (also called \emph{threshold}), denoted by $\pbT$. In this version the objective is to minimize the cost
of finding a value $\leq T$, while we have the extra option to quit searching
by opening an \emph{outside option} box of cost $T$. We say that a scenario is \emph{covered} in a
given run of the algorithm if it does not choose the outside option box $T$.

We show that \pbText{} can be reduced to $\pbT$ with a logarithmic loss in approximation factor, and then $\pbT$ can be reduced to \msscText{} with a constant factor loss. The following two results capture the details of these reductions.

\begin{restatable}{claim}{upbTToUmsscf}\label{thm:upbT_to_umsscf}
If there exists an $\alpha(n,m)$ approximation algorithm for $\umsscf$ then there exists an $3\alpha(n+m,m^2)$-approximation for $\upbT$.
\end{restatable}

It is also worth noting that $\pbT$ is a special case of the Adaptive Ranking problem which directly implies a $\log m$ approximation factor (given in \cite{KambNagaNavi2017}).

\begin{restatable}{mainLemma}{pbvToUpbLog}\label{thm:pbv_to_upb0_log}
	Given a polynomial-time $\alpha(n,m)$-approximation algorithm for $\upbT$,
	there exists a polynomial-time $ {O}(\alpha(n,m) \log \alpha(n,m))$-approximation for $\pbv$.
\end{restatable}

The relationship between
$\pbT$ and \msscText{} is relatively straightforward and requires explicitly relating the structure of feedback in the two problems. We describe the details in Section~\ref{sec:apn_tool} of the Appendix.

\paragraph{Putting it all together.}
The proof of Theorem~\ref{thm:pb_to_mssc} follows by combining Claim~\ref{thm:upbT_to_umsscf} with Lemmas~\ref{thm:pbv_to_pb0_log} and~\ref{thm:pbv_to_upb0_log} presented in the following sections. Proofs of Claims~\ref{cl:msscf_to_pb}, \ref{thm:upbT_to_umsscf} deferred to Section~\ref{sec:apn_tool} of the Appendix. The rest of this section is devoted to proving Lemmas~\ref{thm:pbv_to_pb0_log} and~\ref{thm:pbv_to_upb0_log}. The landscape of reductions shown in this section is  presented in Figure~\ref{fig:pb_and_mssc}.

\begin{figure}[H]
    \centering
    \newcommand{\bl}[1]{\textcolor{black}{#1}}
\pgfmathsetmacro{\dist}{3}
\pgfmathsetmacro{\distV}{2.5}
\pgfmathsetmacro{\offset}{1.5}
\pgfmathsetmacro{\ypomnimaX}{6}
\pgfmathsetmacro{\ypomnimaY}{1.8}
\begin{tikzpicture}

	\node (pbv) at (0,0){{\Large $\pbv$}};

	\node (upbT) at (\offset,\distV){{\Large $\upbT$}};
 	\node (pbT) at (-\offset,\distV){{\Large $\pbT$}};
  
	\node (umsscf) at (1.5*\offset,2*\distV){{\Large $\umsscf$}};
 
 	\node (msscf) at (-1.5*\offset,2*\distV){{\Large $\msscf$}};

\draw[->,dotted, gray, thick, opacity=0.6] (umsscf) -- (msscf);
   \draw[->,dotted, gray, thick, opacity=0.6] (upbT) --  (pbT);

\draw[->,very thick, dashed] (pbv) -- node [above, left] {\bl{\textbf{Lem}~}\ref{thm:pbv_to_pb0_log}} (pbT);
    \draw[->,very thick, dashed] (pbv) -- node [above, right] {\bl{\textbf{Lem}~}\ref{thm:pbv_to_upb0_log}}(upbT);

 \draw[->, thick] (pbT) -- node [above, left] {Claim~\ref{thm:upbT_to_umsscf}} (msscf);

        \draw[->,thick] (upbT) -- node [above, right] { Claim~\ref{thm:upbT_to_umsscf}}(umsscf);

    \draw[->,dashed, gray] (msscf) to [out=180, in=180]  node [above, left] {Claim~\ref{cl:msscf_to_pb}}  (pbv);

    \draw[->,very thick, dashed] (\ypomnimaX-1, \ypomnimaY+0.8) -- node[above] {{\footnotesize Main Lemma ($\log$ factors)}} (\ypomnimaX+1, \ypomnimaY+0.8);
    
        \draw[->,thick] (\ypomnimaX-1, \ypomnimaY) -- node[above] {{\footnotesize Claim (const. factors)}} (\ypomnimaX+1, \ypomnimaY);
        
    \draw[->,dashed,thick, gray] (\ypomnimaX-1, \ypomnimaY-0.8) -- node[above] {{\footnotesize Minor Claim}} (\ypomnimaX+1, \ypomnimaY-0.8);
    
	  \draw[->,dotted, gray, thick, opacity=0.6] (\ypomnimaX-1, \ypomnimaY-1.6) -- node[above] {{\footnotesize Subproblem}} (\ypomnimaX+1, \ypomnimaY-1.6);


\end{tikzpicture}
    \caption{Reductions shown in this section. Claim~\ref{thm:upbT_to_umsscf} alongside Lemmas~\ref{thm:pbv_to_pb0_log} and~\ref{thm:pbv_to_upb0_log} are part of Theorem~\ref{thm:pb_to_mssc}.}
    \label{fig:pb_and_mssc}
\end{figure}

\subsection{Reducing \pbText{} to $\pbT$}\label{subsec:pbt}

Recall that a solution to \pbText{} involves two components ; (1) the order in which to open boxes and (2) a stopping rule.
The goal of the reduction to $\pbT$ is to simplify the stopping rule of the problem, by making values either $0$ or $\infty$, therefore allowing us to focus on the order in which boxes are opened, rather than which value to stop at. 
We start by presenting our main tool, a reduction to \msscText{} in Section~\ref{subsec:main_tool} and then improve upon that to reduce from the \textbf{uniform} version of $\msscf$ (Section~\ref{subsec:improved_tool}).

\subsubsection{Main Tool}\label{subsec:main_tool}
  The high level idea in this reduction is that we repeatedly run the algorithm for $\pbT$ with increasingly larger value of $T$ with the goal of covering some mass of scenarios at every step. The thresholds for every run have to be cleverly
chosen to guarantee that enough mass is covered at every run.  The
distributions on the boxes remain the same, and this reduction does not
increase the number of boxes, therefore avoiding the issues faced by the naive
reduction given in Section~\ref{subsec:naive} of the Appendix. Formally, we show the following lemma.

\begin{mainLemma}\label{thm:pbv_to_pb0_log}
	Given a polynomial-time $\alpha(n,m)$-approximation algorithm for $\pbT$,
	there exists a polynomial-time $ {O}(\alpha(n,m) \log \alpha(n,m))$-approximation for $\pbv$.
\end{mainLemma}

	\begin{algorithm}[H]

\KwIn{Oracle $\mathcal{A}(T)$ for $\pbT$, set of all scenarios $\scenarios$.}
		$i \leftarrow 0$ \tcp{Number of current Phase}
		\While{$\scenarios \neq \emptyset$}{
	Use $\mathcal{A}$ to find smallest $T_i$ via Binary Search s.t. $\Pr{}{\text{accepting the outside option }T_i} \leq 0.2$ \\
	Call the oracle  $\mathcal{A}(T_i)$ on set $\scenarios$ to obtain policy $\pi_i$\\
		    $\scenarios \leftarrow \scenarios \setminus $ \{scenarios with total cost $\leq T_i$\}
		    }
            \For{$i\leftarrow 0$ to $\infty$}{
            Run policy $\pi_i$ until it terminates and selects a box, or accumulates probing cost $T_i$.
            }
      
		\caption{Reduction from $\pbv$ to $\pbT$.}\label{algo:pbv_pb0}
	\end{algorithm}
\mbox{}\\

We will now analyze the policy produced by this algorithm. 

\begin{proof}[Proof of Main Lemma~\ref{thm:pbv_to_pb0_log}]
We start with some notation.  Given an instance $\mathcal{I}$ of $\pbv$, we repeatedly run $\pbT$ in \emph{phases}. Phase $i$ consists of running $\pbT$ with threshold $T_i$ on a sub instance of the original problem where we are left with a smaller set of scenarios, with their probabilities reweighted to sum to $1$. Call this set of scenarios $\scenarios_i$ for phase $i$ and the corresponding instance $\mathcal{I}_i$. 
  After every
	phase $i$, we remove the probability mass that was covered\footnote{Recall, a
	scenario is \emph{covered} if it does not choose the outside option box.},
	and run $\pbT$ on this new instance with a new threshold $T_{i+1}$. In each
	phase, the boxes, costs and values remain the same, but the stopping condition changes: thresholds $T_i$ increase in every subsequent phase. The
	thresholds are chosen such that at the end of each phase, $0.8$ of the
	remaining probability mass is covered. The reduction process is formally
	shown in Algorithm~\ref{algo:pbv_pb0}.

	\paragraph{Accounting for the cost of the policy.} We first note that the total cost of the policy in phase $i$ conditioned on reaching that phase is at most $2T_i$: if the policy terminates in that phase, it selects a box with value at most $T_i$. Furthermore, the policy incurs probing cost at most $T_i$ in the phase. We can therefore bound the total cost of the policy as $\le 2\sum_{i=0}^\infty (0.2)^i T_i$.\\

\noindent
    We will now relate the thresholds $T_i$ to the cost of the optimal PB policy for $\mathcal{I}$. To this end, we define corresponding thresholds for the optimal policy that we call \emph{$p$-thresholds}. Let $\pi^*_\mathcal{I}$ denote the optimal PB policy for $\mathcal{I}$ and let $c_s$ denote the cost incurred by $\pi^*_\mathcal{I}$ when scenario $i$ is realized. A $p$-threshold is the minimum possible threshold $T$ such that
at most $p$ mass of the scenarios has cost more than $T$ in $\pbv$, formally defined below.
\begin{definition}[$p$-Threshold]
	Let $\mathcal{I}$ be an instance of $\pbv$ and $c_s$ be the cost of
	scenario $s\in \scenarios$ in $\pi^*_\mathcal{I}$, we define the $p$-threshold as
	\[
		t_p = \min\{T : \Pr{}{c_s > T} \leq p\}.
	\]
\end{definition}

The following two lemmas relate the cost of the optimal policy to the $p$-thresholds, and the $p$-thresholds to the thresholds $T_i$ our algorithm finds. The proofs of both lemmas are deferred to
Section~\ref{subsec:apn_main_tool} of the Appendix. 
We first formally define a
\emph{sub-instance} of the given \pbText{} instance.

\begin{definition}[Sub-instance]
		Let $\mathcal{I}$ be an instance of $\{ \pbT, \pbv \}$ with set of scenarios
	$\mathcal{S}_\mathcal{I}$ each with probability $p^\mathcal{I}_s$. For any
	$q\in [0,1]$ we call $\mathcal{I}'$ a $q$-sub instance of $\mathcal{I}$ if
	$\mathcal{S}_{\mathcal{I}'} \subseteq \mathcal{S}_\mathcal{I}$ and
	$\sum_{s\in \mathcal{S}_{\mathcal{I}'}} p_s^\mathcal{I} = q$.
\end{definition}

\begin{restatable}{lemma}{pbvOpt}\label{lem:pbv_opt}(Optimal Lower Bound) Let $\mathcal{I}$ be
	the instance of $\pbv$. For any $q<1$, any $\alpha>1$, and $\beta \geq 2$, for the optimal policy $\pi^*_{\mathcal{I}}$ for $\pbv$ it that
	\[
		\cost(\pi^*_\mathcal{I}) \geq \sum_{i=0}^\infty \frac{1}{\beta\alpha} \cdot \lp(q\rp)^{i} t_{q^i/\beta\alpha}.
	\]
\end{restatable}

\begin{restatable}{lemma}{thresholdBound}\label{lem:thresh_bound}
	Given an instance $\mathcal{I}$ of $\pbv$; an $\alpha$-approximation
	algorithm $\mathcal{A}_T$ to $\pbT$; and any $q<1$ and $\beta\ge 2$, suppose that the threshold $T$ satisfies  
	\[
		T \geq t_{q/(\beta \alpha)} 
		+ \beta  \alpha \sum_{\substack{ c_s \in [t_q, t_{q/(\beta \alpha)}]\\ s\in \scenarios }} c_s \frac{p_s}{q}. 		
	\]
Then if $\mathcal{A}_T$ is run on a $q$-sub instance 
of $\mathcal{I}$ with threshold $T$, at most a total mass of $(2/\beta)q$ of the scenarios pick the outside option box $T$.
\end{restatable}

	\paragraph{Calculating the thresholds.}
	For every phase $i$ we choose a threshold $T_i$ such that $T_i = \min \{T :
	\Pr{}{c_s>T} \leq 0.2\}$ i.e. at most $0.2$ of the probability mass of
	the scenarios are not covered. In order to
	select this threshold, we do binary search starting from $T = 1$, running
	every time the $\alpha$-approximation algorithm for $\pbT$ with outside
	option box $T$ and checking how many scenarios select it.  We denote by
	$\intv_i = [t_{(0.2)^i}, t_{(0.2)^i/(10\alpha)} ] $ the relevant interval of costs at every run of the algorithm, then by
	Lemma~\ref{lem:thresh_bound} for $\beta=10$, we know that for remaining total probability
	mass $(0.2)^i$, any threshold which satisfies
	\begin{equation*}
		T_i \geq t_{(0.2)^{i-1}/10a} 
			+ 10 \alpha \sum_{\substack{s\in \scenarios\\ c_s \in \intv_i }} c_s \frac{p_s}{(0.2)^i} 
	\end{equation*}
	also satisfies the desired covering property, i.e. at least $0.8$ mass of the
	current scenarios is covered. Therefore the threshold

 $T_i$ found by our binary search satisfies the following 
 \begin{equation}\label{eq:threshold}
		T_i = t_{(0.2)^{i-1}/10a} 
			+ 10 \alpha \sum_{\substack{s\in \scenarios\\ c_s \in \intv_i }} c_s \frac{p_s}{(0.2)^i}. 
	\end{equation}

	\paragraph{Bounding the final cost.}
	To bound the final cost, we recall that at the end of every phase we cover $0.8$ of the remaining
	scenarios. Furthermore, we observe that each threshold $T_i$ is charged in the above Equation \eqref{eq:threshold} to optimal costs of scenarios corresponding to intervals of the
	form $ \intv_i =[t_{(0.2)^i}, t_{(0.2)^i/(10\alpha)} ]$. Note
	that these intervals are overlapping. We therefore get
	\begin{align*}
			\cost(\pi_\mathcal{I}) &\leq  2 \sum_{i=0}^\infty (0.2)^i T_i & \\
				 & = 2 \sum_{i=0}^\infty \lp( (0.2)^i t_{(0.2)^{i-1}/10a} 
				 + 10 \alpha \sum_{\substack{s\in \scenarios\\ c_s \in \intv_i }} c_s p_s \rp)
			 		& \text{From equation \eqref{eq:threshold}} \\
			& \leq 4\cdot 10\alpha \pi^*_\mathcal{I} + 
				20 \alpha \sum_{i=0}^\infty  \sum_{\substack{s\in \scenarios\\ c_s \in \intv_i }} c_s p_s 
					&\text{Using Lemma~\ref{lem:pbv_opt} for $\beta=10, q=0.2$}\\
			&\leq 40\alpha \log \alpha \cdot \pi^*_\mathcal{I}. & 
	\end{align*}
	Where the last inequality follows since each scenario with cost $c_s$ can
	belong to at most $\log \alpha$ intervals, therefore we get the theorem.
\end{proof}
 
	Notice the generality of this reduction; the distributions on the values
	are preserved, and we did not make any more assumptions on the scenarios or
	values throughout the proof. Therefore we can apply this tool regardless of
	the type of correlation or the way it is given to us, e.g. we could be
	given a parametric distribution, or an explicitly given distribution, as we
	see in the next section.
		
\subsubsection{An Even Stronger Tool}\label{subsec:improved_tool}
Moving one step further, we show that if we instead of $\pbT$ we had an $\alpha$-approximation algorithm for $\upbT$ we can obtain the same guarantees as the ones described in Lemma~\ref{thm:pbv_to_pb0_log}. Observe that we cannot directly use  Algorithm~\ref{algo:pbv_pb0} since the oracle now requires that all scenarios have the same probability, while this might not be the case in the initial $\pbv$ instance. The theorem stated formally follows.

\pbvToUpbLog*

We are going to highlight the differences with the proof of Main Lemma~\ref{thm:pbv_to_pb0_log}, and show how to change Algorithm~\ref{algo:pbv_pb0} to work with the new oracle, that requires the scenarios to have uniform probability. The function \textbf{Expand} shown in Algorithm~\ref{algo:pbv_upb0} is used to transform the instance of scenarios to a uniform one where every scenario has the same probability by creating multiple copies of the more likely scenarios. The function is formally described in Algorithm~\ref{algo:expand} in Section~\ref{sec:apn_new_tool} of the Appendix, alongside the proof of Main Lemma~\ref{thm:pbv_to_upb0_log}.\\

	\begin{algorithm}[H]
	\KwIn{Oracle $\mathcal{A}(T)$ for $\upbT$, set of all scenarios $\scenarios$, $c=1/10, \delta=0.1$.}
		$i \leftarrow 0$ \tcp{Number of current Phase}
		\While{$\scenarios \neq \emptyset$}{
		Let $\mathcal{L} = \lp\{s\in \scenarios : p_s \leq c\cdot \frac{1}{|\scenarios|}\rp\}$ \tcp{Remove low probability scenarios}
		$\scenarios' = \scenarios \setminus \mathcal{L}$\\
		$\mathcal{UI} = $ Expand($\scenarios'$)\\
	In instance $\mathcal{UI}$ use $\mathcal{A}$ to find smallest $T_i$ via Binary Search s.t. $\Pr{}{\text{accepting }T_i} \leq \delta$ \\
	Call the oracle  $\mathcal{A}(T_i)$\\
    $\scenarios \leftarrow \big( \scenarios' \setminus  \{ s\in\scenarios': c_s\leq T_i \} \big) \cup \mathcal{L}$
		    }
		\caption{Reduction from $\pbv$ to $\upbT$.}\label{algo:pbv_upb0}
	\end{algorithm}

\section{Connecting $\msscf$ and \dtText}\label{sec:mssc_to_dt}
In this section we establish the connection between \msscText{} and \dtText{}. We show that the uniform versions of these problems are equivalent up to constant factors in approximation ratio. The results of this section are summarized in Figure~\ref{fig:mssc_and_odt} and the two results below.

\begin{figure}[H]
    \centering
    \newcommand{\bl}[1]{\textcolor{black}{#1}}
\pgfmathsetmacro{\dist}{3}
\pgfmathsetmacro{\distV}{1.9}
\pgfmathsetmacro{\offset}{1}
\pgfmathsetmacro{\ypomnimaX}{6}
\pgfmathsetmacro{\ypomnimaY}{7}
\begin{tikzpicture}

	\node (umsscf) at (2*\offset,2*\distV){{\large $\umsscf$}};
 	\node (msscf) at (-2*\offset,2*\distV){{\large $\msscf$}};

	\node (udt) at (2*\offset, 3*\distV){{\large $\udt$}};
 	\node (odt) at (-2*\offset, 4*\distV){{\large $\odt$}};
	\draw[->,dotted, gray, thick, opacity=0.6] (umsscf) -- (msscf);
 
\draw[->,dotted, gray, thick, opacity=0.6] (udt) --  (odt);

 
    \draw[->,dashed, thick, gray] (msscf) -- node [above, left] {\bl{Claim~}\ref{thm:msscf_to_odt}} (odt); 
    
    \draw[->, dashed, thick, gray] (umsscf) to [out=135,in=180]  node [above, left] {\textcolor{black}{Claim~\ref{thm:msscf_to_odt}}}  (udt);

    \draw[->,very thick] (udt) to [in=45, out=0] node [above, right] {\bl{Thm~}\ref{thm:udt_to_pbT}}  (umsscf);

    
        \draw[->,thick] (\ypomnimaX-1, \ypomnimaY) -- node[above] {{\footnotesize Main Theorem: const. factors}} (\ypomnimaX+1, \ypomnimaY);
        
    \draw[->,dashed,thick, gray] (\ypomnimaX-1, \ypomnimaY-0.8) -- node[above] {{\footnotesize Minor Claim}} (\ypomnimaX+1, \ypomnimaY-0.8);
    
	  \draw[->,dotted, gray, thick, opacity=0.6] (\ypomnimaX-1, \ypomnimaY-1.6) -- node[above] {{\footnotesize Subproblem}} (\ypomnimaX+1, \ypomnimaY-1.6);

\end{tikzpicture}
    \caption{Summary of reductions in Section~\ref{sec:mssc_to_dt}}
    \label{fig:mssc_and_odt}
\end{figure}

\begin{restatable}{claim}{msscfToOdt}\label{thm:msscf_to_odt}
	If there exists an $\alpha(n,m)$-approximation algorithm for $\odt$ ($\udt$) then there exists a $\lp( 1+\alpha(n,m)\rp)$-approximation algorithm for $\msscf$ (resp. $\umsscf$).
\end{restatable}

\begin{restatable}[\udtText{} to $\umsscf$]{theorem}{udtTopbT}\label{thm:udt_to_pbT}
Given an $\alpha(m,n)$-approximation algorithm for  $\umsscf$ then there exists an
	$O(\alpha(n+m, m))$-approximation algorithm for $\udt$.
\end{restatable}

The formal proofs of these statements can be found in Section~\ref{sec:apn_mssc_and_odt} of the Appendix. Here we sketch the main ideas. 

One direction of this equivalence is again easy to see. The main difference between \dtText{}  and $\msscf$ is that the former requires scenarios to be exactly identified whereas in the latter it suffices to simply find an element that covers the scenario. In particular, in $\msscf$ an algorithm could cover a scenario without identifying it by, for example, covering it with an element that covers multiple scenarios. To reduce $\msscf$ to $\odt$ we simply 
introduce extra feedback into all of the elements of the $\msscf$ instance such that the elements isolate any scenarios they cover. (That is, if the algorithm picks an element that covers some subset of scenarios, this element provides feedback about which of the covered scenarios materialized.) This allows us to relate the cost of isolation and the cost of covering to within the cost of a single additional test, implying Claim~\ref{thm:msscf_to_odt}.


\paragraph{Proof Sketch of Theorem~\ref{thm:udt_to_pbT}.} The other direction is more complicated, as we want to ensure that covering implies isolation. Given an instance of $\udt$, we create a special element for each scenario which is the unique element covering the scenario and also isolates the scenario from all other scenarios. The intention is that an algorithm for $\msscf$ on this new instance only chooses the special isolating element in a scenario after it has identified the scenario. If that happens, then the algorithm's policy is a feasible solution to the $\udt$ instance and incurs no extra cost. The problem is that an algorithm for $\msscf$ over the modified instance may use the special covering element before isolating a scenario. We argue that this choice can be "postponed" in the policy to a point at which isolation is nearly achieved without incurring too much extra cost. This involves careful analysis of the policy's decision tree and we present details in the appendix.

\paragraph{Why our reduction does not work for $\odt$.} Our analysis above heavily uses the fact that the probabilities of all scenarios in the $\udt$ instance are equal. This is because the "postponement" of elements charges increased costs of some scenarios to costs of other scenarios. In fact, our reduction above fails in the case of non-uniform distributions over scenarios -- it can generate an $\msscf$ instance with optimal cost much smaller than that of the original $\odt$ instance. 



To see this, consider an example with $m$ scenarios where scenarios $1$ through $m-1$ happen with probability $\e/(m-1)$ and scenario $m$ happens with probability $1-\e$. There are $m-1$ tests of cost $1$ each. Test $i$ for $i\in [m-1]$ isolates scenario $i$ from all others. Observe that the optimal cost of this $\odt$ instance is at least $(1-\e)(m-1)$ as all $m-1$ tests need to be run to isolate scenario $m$. Our construction of the $\msscf$ instance adds another isolating test for scenario $m$. A solution to this instance can use this new test at the beginning to identify scenario $m$ and then run other tests with the remaining $\e$ probability. As a result, it incurs cost at most $(1-\e)+\e(m-1)$, which is a factor of $1/\e$ cheaper than that of the original $\odt$ instance.



\section{Mixture of Product Distributions}\label{sec:mixt}
	In this section we switch gears and consider the case where we are given a mixture of $m$ product distributions.
Observe that using the tool described in
Section~\ref{subsec:main_tool}, we can reduce this problem to $\pbT$. This now
is equivalent to the noisy version of $\odt$
\cite{GoloKrau2017correct,JiaNagaNaviRavi2019} where for a specific scenario,
the result of each test is not deterministic and can get different values with
different probabilities.

\paragraph{Comparison with previous work:} previous work on noisy decision tree, considers limited noise models or the
runtime and approximation ratio depends on the type of noise. For example in
the main result of \cite{JiaNagaNaviRavi2019}, the noise outcomes are binary
with equal probability. The authors mention that it is possible to extend the
following ways:
\begin{itemize}
	\item to probabilities within $[\delta, 1-\delta]$, incurring an extra $
			1/\delta$ factor in the approximation
	\item to non-binary noise outcomes, incurring an extra at most $m$ factor
			in the approximation
\end{itemize}
Additionally, their algorithm works by expanding the scenarios for every
possible noise outcome (e.g. to $2^m$ for binary noise). In our work the number
of noisy outcomes does not affect the number of scenarios whatsoever.

In our work, we obtain a \textbf{constant approximation} factor, that does not
depend in any way on the type of the noise. Additionally, the outcomes of the
noisy tests can be arbitrary, and do not affect either the approximation factor
or the runtime. We only require a \emph{separability} condition to hold ; the
distributions either differ \emph{enough} or are exactly the same.  Formally,
we require that for any two scenarios $s_1, s_2\in \scenarios$ and for every
box $i$, the distributions $\mathcal{D}_{is_1}$ and $\mathcal{D}_{is_2}$
satisfy $\lp| \mathcal{D}_{is_1} - \mathcal{D}_{is_2} \rp| \in \mathbb{R}_{\geq
\e} \cup \{0\}$, where $|\mathcal{A} - \mathcal{B}|$ is the total variation
distance of distributions $\mathcal{A}$ and $\mathcal{B}$.

\subsection{A DP Algorithm for noisy $\pbT$}
	We move on to designing a dynamic programming algorithm to solve the
	$\pbT$ problem, in the case of a mixtures of product distributions. The
	guarantees of our dynamic programming algorithm are given in the following
	theorem.

	\begin{theorem}\label{thm:dp}
		For any $\beta >0$, let $\pi_{\scaleto{DP}{5pt}}$ and $\pi^*$ be the
		policies produced by Algorithm $\DP(\beta)$ described by
		Equation~\eqref{eq:dp} and the optimal policy 
		respectively and $\ub = \frac{m^2}{\e^2}
		\log \frac{m^2T}{c_{\min}\beta}$. Then it holds that 
		\[
			c(\pi_{\scaleto{\DP}{5pt}}) \leq (1+\beta)c(\pi^*)
		.\] 
		and the $\DP$ runs in time $n^{\ub}$, where
		$n$ is the number of boxes and $c_{\min}$ is the minimum cost box.
	\end{theorem}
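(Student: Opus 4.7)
The plan is to formulate a dynamic program whose state is a ``history'' of at most $\ub$ box openings together with discretized outcomes, then invoke the separability condition to argue that beyond length $\ub$ the posterior is so concentrated that truncation costs at most a $(1+\beta)$ factor.

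First, I would prove a concentration lemma powered by separability. For any two scenarios $s_1,s_2$ and any box $i$ with $|\mathcal{D}_{is_1}-\mathcal{D}_{is_2}|\ge\e$, a likelihood-ratio / Chernoff argument shows that after observing outcomes on $k$ such distinguishing boxes, the posterior ratio between $s_1$ and $s_2$ is off its true direction with probability at most $\exp(-\Omega(k\e^2))$. Taking a union bound over the $\binom{m}{2}$ pairs and choosing $k=\ub=\Theta\!\bigl(\tfrac{m^2}{\e^2}\log\tfrac{m^2T}{c_{\min}\beta}\bigr)$, every history of length $\ub$ has, with probability $1-O(\beta c_{\min}/(m^2T))$, a posterior supported effectively on a single scenario. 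A parallel observation is that after $\ub$ openings the accumulated probing cost already dominates $T$, so truncating by taking the outside option $T$ at that depth adds at most $O(\beta c_{\min})$ per scenario in expectation, which is absorbable into $\opt \ge c_{\min}$.

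Second, I would write the Bellman equation $\DP(\beta)$: the state $h$ is a sequence of (box, discretized-outcome) pairs of length at most $\ub$; the value $V(h)$ is the minimum expected remaining cost; and the recursion is
\[
V(h)=\min\!\Bigl\{T,\ \min_{b\notin h}\Bigl(c_b+\sum_{o}\Pr[o\mid h,b]\cdot V(h\circ(b,o))\Bigr)\Bigr\},
\]
with the convention $V(h)=T$ when $|h|=\ub$ and a base case $V(h)=0$ if $h$ already revealed a value $\le T$. Outcomes are discretized to the finite set actually attainable by the $m$ mixture components; because each box produces values drawn from one of $m$ marginals, we can collapse them to the $m$ ``effective'' labels that matter for posterior updates, keeping the branching factor $O(m)$ per node. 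The total number of reachable states is at most $(nm)^{\ub}=n^{\tilde O(\ub)}$, which matches the claimed runtime after absorbing $m$-factors into the logarithm.

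Third, I would run the approximation analysis by coupling $\pi_{\scaleto{DP}{5pt}}$ to $\pi^*$. Restrict $\pi^*$ to depth $\ub$: on every leaf where $\pi^*$ has not yet stopped, pay $T$ instead; by the concentration step this incurs additional expected cost $O(\beta c_{\min})\le\beta\,c(\pi^*)$, since $c(\pi^*)\ge c_{\min}$. The truncated $\pi^*$ is a feasible policy on the DP state space, so $c(\pi_{\scaleto{DP}{5pt}})\le c(\text{truncated }\pi^*)\le(1+\beta)c(\pi^*)$. Combining with the runtime bound yields the theorem.

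The hard part will be the concentration step: getting the right dependence of $\ub$ on $m,\e,T,c_{\min},\beta$ and making sure that the union bound over histories (not just over pairs) still goes through with only a log-factor blowup. A clean way is to argue pointwise along each realized trajectory rather than uniformly over all trajectories, combined with the observation that the expected extra cost incurred by a ``bad'' trajectory is at most $T$, so only the $\beta c_{\min}/T$-probability tail needs to be controlled. Once the concentration bound is tight, everything else is standard DP accounting.
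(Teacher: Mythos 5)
Your high-level plan (concentration via separability, a DP over a bounded state space, coupling against the optimum) matches the spirit of the paper's argument, but two of your steps have genuine gaps.

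\textbf{The discretization to ``$m$ effective labels'' is unjustified.} You claim that because a box's value is drawn from one of $m$ marginals, the outcome can be collapsed to $O(m)$ labels for posterior-update purposes. This is false in general: the posterior update depends on the full likelihood vector $(\Pr_{\mathcal{D}_{is_1}}[v],\dots,\Pr_{\mathcal{D}_{is_m}}[v])$, which can take infinitely many meaningfully distinct values even with only $m$ components. Without a correct discretization, your state space $(nm)^{\ub}$ is not achieved, and keeping raw histories is exponential. The paper avoids this entirely: its state does not record outcomes at all but only, for each ordered pair $(s_i,s_j)$, the \emph{binary} indicator of whether the observed value was more likely under $s_i$ (the $z_{ij}$ counter) together with the count $t_{ij}$. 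Hoeffding then applies per pair, and the state for $E$ has size polynomial in $n$. This pairwise sufficient-statistic trick is the key idea you are missing, and it is what makes the runtime $n^{\tilde O(m^2/\e^2)}$ rather than something exponential in the outcome alphabet.

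\textbf{The truncation argument is wrong as stated.} You assert that ``after $\ub$ openings the accumulated probing cost already dominates $T$,'' i.e.\ $\ub\cdot c_{\min}\ge T$. Plugging in $\ub=\frac{m^2}{\e^2}\log\frac{m^2T}{c_{\min}\beta}$ shows this fails whenever $c_{\min}$ is small relative to $T$ (then $\ub\cdot c_{\min}\to 0$). The paper's DP does not truncate at a fixed depth; its base case is $|S|=1$, and the number of \emph{informative} boxes is bounded by Lemma~\ref{lem:dp_tests} while non-informative boxes are chosen greedily and tracked only by their count. The approximation accounting is also different from yours: rather than charging a truncated $\pi^*$, the paper argues that once a single scenario survives, the DP's greedy order is optimal for that scenario; the only loss comes from misidentification, which by a union bound over the $m^2$ pairs happens with probability at most $m^2\delta=\beta c_{\min}/T$, costing at most $T$, hence an additive $\beta c_{\min}\le\beta\, c(\pi^*)$. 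Finally, your coupling needs $\pi^*$ restricted to your DP's action set to remain near-optimal; the paper's DP restricts non-informative boxes to the single greedy choice $\nib^*(S)$, and you would need (and the paper implicitly uses) the observation that the greedy order among non-informative boxes is without loss once the scenario set is fixed. You should make this explicit if you pursue your route.
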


	Using the reduction described in Section~\ref{subsec:main_tool} and the
	previous theorem we can get a constant-approximation algorithm for the
	initial $\pbv$ problem given a mixture of product distributions. 	Observe that in the
	reduction, for every instance of $\pbT$ it runs, the chosen threshold $T$
	satisfies that $T \leq (\beta+1) c(\pi^*_T)/0.2$ where $\pi^*_T$ is the
	optimal policy for the threshold $T$. The inequality holds since the
	algorithm for the threshold $T$ is a $(\beta+1)$ approximation and it
	covers $80\%$ of the scenarios left (i.e. pays $0.2T$ for the rest).
	This is formalized in the following corollary.

	\begin{corollary}\label{cor:mixt}
		Given an instance of $\pbv$ on $m$ scenarios, and the DP algorithm
		described in Equation~\eqref{eq:dp}, then using Algorithm~\ref{algo:pbv_pb0}
		 we obtain an $O(1)$-approximation
		algorithm for $\pbv$ that runs in $n^{\tilde{O}(m^2/\e^2)}$.
	\end{corollary}
	
	Observe that the naive DP, that keeps track of all the boxes and possible
	outcomes, has space exponential in the number of boxes, which can be very
	large. In our DP, we exploit the separability property of the distributions
	by distinguishing the boxes in two different types based on a given set of
	scenarios. Informally, the \emph{informative} boxes help us distinguish
	between two scenarios, by giving us enough TV distance, while the
	\emph{non-informative} always have zero TV distance. The formal definition
	follows.
	
	\begin{definition}[Informative and non-informative boxes]
			Let $S\subseteq \scenarios$ be a set of scenarios. Then we call a box $k$ \emph{informative} if
		there exist $s_i, s_j\in \scenarios$ such that
		\[
			| \mathcal{D}_{ks_i} - \mathcal{D}_{ks_j}| \geq \e.
		\]
		We denote the set of all \emph{informative} boxes by $\ib(S)$.
		Similarly, the boxes for which the above does not hold are
		called \emph{non-informative} and the set of these boxes is denoted by
		$\nib(S)$.
	\end{definition}
	
	\paragraph{Recursive calls of the DP:}Our dynamic program chooses
	at every step one of the following options:
	\begin{enumerate}
		\item open an \textbf{informative} box: this step contributes towards
				\emph{eliminating} improbable scenarios. From the definition of
				informative boxes, every time such a box is opened, it gives
				TV distance at least $\e$ between at least two
				scenarios, making one of them more probable than the other. We
				show
				(Lemma~\ref{lem:dp_tests}) that it takes a finite amount of
				these boxes to decide, with high probability, which scenario is
				the one realized (i.e. eliminating all but one scenarios). 
		\item open a \textbf{non-informative} box: this is a greedy step; the
				best non-informative box to open next is the one that maximizes
				the probability of finding a value smaller than $T$. Given a
				set $S$ of scenarios that are not yet eliminated, there is a unique next
					non-informative box which is best.  We denote by $\nib^*(S)$ the
				function that returns this next best non-informative box.
				Observe that the non-informative boxes do not affect the greedy
				ordering of which is the next best, since they do not affect
				which scenarios are eliminated.
	\end{enumerate}

	 \paragraph{State space of the DP:} the DP keeps track of the following three quantities:
	 \begin{enumerate}
		 \item \textbf{a list $M$} which  consists of sets of informative boxes
				 opened and numbers of non-informative ones opened in between
					 the sets of informative ones.  Specifically, $M$ has the
					 following form: $M = S_1| x_1 |S_2| x_2| \ldots |S_L
					 |x_L$\footnote{If $b_i$ for $i\in [n]$ are boxes, the list
					 $M$ looks like this: $b_3 b_6 b_{13}| 5 | b_{42} b_1 | 6 |
					 b_2$} where $S_i$ is a set of informative boxes, and
					 $x_i\in \mathbb{N}$ is the number of non-informative boxes
					 opened exactly after the boxes in set $S_i$. We also
					 denote by $\ib(M)$ the informative boxes in the list $M$.

			In order to update $M$ at every recursive call, we either append a
					 new informative box $b_i$ opened (denoted by $M| b_i$) or,
					 when a non-informative box is opened, we add $1$ at the end, denoted by $M+1$.
			 \item \textbf{a list $E$} of $m^2$ tuples of integers $(z_{ij},
					 t_{ij})$, one for each pair of distinct scenarios $(s_i,
					 s_j)$ with $i,j\in[m]$. The number $z_{ij}$ keeps track of
					 the number of informative boxes between $s_i$ and $s_j$
					 that the value discovered had higher probability for
					 scenario $s_i$,
					 and the number $t_{ij}$ is the total number of informative
					 for scenarios $s_i$ and $s_j$ opened. Every time an
					 informative box is opened, we increase the $t_{ij}$
					 variables for the scenarios the box was informative
					 and add $1$ to the $z_{ij}$ if the value discovered
					 had higher probability in $s_i$.  When a non-informative
					 box is opened, the list remains the same.We denote this
					 update by $E^{\scaleto{++}{5pt}}$.
			 \item \textbf{a list $S$} of the scenarios not yet eliminated. Every time
					 an informative test is performed, and the list $E$
					 updated, if for some scenario $s_i$ there exists another
					 scenario $s_j$ such that $t_{ij} > 1/\e^2 \log (1/\delta)$
					 and $|z_{ij} - \E{}{z_{ij}|s_i}| \leq \e/2$ then $s_j$ is
					 removed from $S$, otherwise $s_i$ is removed\footnote{This is the process of elimination in the proof of Lemma~\ref{lem:dp_tests}}. This update is denoted by $S^{\scaleto{++}{5pt}}$.
	 \end{enumerate}

	 \paragraph{Base cases:} 
	 if a value below $T$ is found, the algorithm stops. The
	 other base case is when $|S|=1$, which means that the scenario realized is identified, we either take the
	 outside option $T$ or search the boxes for a value below $T$, whichever is
	 cheapest. If the scenario is identified correctly, the DP finds the expected
	 optimal for this scenario. We later show that we make a mistake only with  low
	 probability, thus increasing the cost only by a constant factor.
	 We denote by $\nat(\cdot,
	 \cdot, \cdot)$ the ``nature's" move, where the value in the box we chose
	 is realized, and $\sol(\cdot, \cdot, \cdot)$ is the minimum value obtained
	 by opening boxes. The recursive formula is shown below.
	 \newcommand{\plus}{\raisebox{.3\height}{\scalebox{.7}{+}}}
	\begin{align}\label{eq:dp}
			\begin{split}
		\sol(M, E, S) & 
				= \begin{cases}
						\min ( T, c_{\scaleto{\nib^*(S)}{6pt}} + \nat(M \plus 1, E, S) ) & \text{ if } |S| = 1\\
						\min\Big(T, \min\limits_{\scaleto{i\in \ib(M)}{6pt}}\lp( c_i\plus \nat(M|i, E, S)\rp)  & \\
							\hspace{1.5cm} ,c_{\scaleto{\nib^*(S)}{6pt}} + \nat(M\plus 1, E, S)  \Big) & \text{else}
				\end{cases}\\
		\nat(M, E, S) & 
				= \begin{cases}
						0  &\hspace{3.35cm}\text{if } v_\text{last box opened} \leq T\\
						\sol(M, E^{\scaleto{++}{4pt}}, S^{\scaleto{++}{4pt}})	& \hspace{3.35cm}\text{else}
			\end{cases}
			\end{split}
	\end{align}
	The final solution is $\DP(\beta) = \sol(\emptyset , E^0, \scenarios)$,
	where $E^0$ is a list of tuples of the form $(0,0)$, and in order to update
	$S$ we set $\delta = \beta c_{\min}/(m^2T)$.

	\begin{restatable}{lemma}{dpTests}\label{lem:dp_tests}
		Let $s_1, s_2\in \scenarios$ be any two scenarios. Then after opening
		$\frac{\log (1/\delta)}{\e^2}$ informative boxes, we can eliminate
		one scenario with probability at least $1-\delta$.
	\end{restatable}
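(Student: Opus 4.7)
\medskip

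\noindent\textbf{Proof plan.} My approach is to turn the counter $z_{ij}$ into an empirical mean of independent bounded random variables whose expectations under $s_1$ and under $s_2$ are separated by at least $\e$, and then invoke Hoeffding's inequality to argue that a sample of size $\Theta(\log(1/\delta)/\e^2)$ suffices to tell the two expectations apart.

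\smallskip

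\noindent The first step is to set up the right indicator. Given a box $k$ that is informative for the pair $(s_1,s_2)$, let $A_k = \{v : \mathcal{D}_{ks_1}(v) \geq \mathcal{D}_{ks_2}(v)\}$, and let $X_k = \ind{v_k \in A_k}$ denote the indicator that the value observed when opening $k$ lies in $A_k$. By the variational characterization of total variation distance,
\[
\E{}{X_k \mid s_1} - \E{}{X_k \mid s_2} \;=\; \mathcal{D}_{ks_1}(A_k) - \mathcal{D}_{ks_2}(A_k) \;=\; \lp|\mathcal{D}_{ks_1} - \mathcal{D}_{ks_2}\rp| \;\geq\; \e.
\]
Exactly this $X_k$ is what the DP increments in $z_{12}$ (up to the cosmetic choice of which scenario is ``$s_i$''), and since each box is drawn from an independent coordinate of the product distribution conditioned on the true scenario, the $X_k$'s are mutually independent Bernoullis under either hypothesis.

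\smallskip

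\noindent The second step is concentration. After $t \geq \log(1/\delta)/\e^2$ informative boxes, Hoeffding's inequality (applied to $z_{12}/t$) gives
\[
\Pr{}{\lp|\,z_{12} - \E{}{z_{12}\mid s_{\text{true}}}\,\rp| > (\e/2)\,t} \;\leq\; 2\exp(-t\e^2/2) \;\leq\; \delta,
\]
where I absorb the small constants into the definition of $t$ (the DP uses $\delta = \beta c_{\min}/(m^2T)$, so an $O(1)$ factor in $t$ only changes $\ub$ by a constant, which is already absorbed by the $\tilde O$ in the runtime). Combined with the separation $|\E{}{z_{12}\mid s_1} - \E{}{z_{12}\mid s_2}| \geq \e t$, this means that with probability at least $1-\delta$ the empirical $z_{12}$ lies within $\e/2$ of the expectation under the true scenario and more than $\e/2$ away from the expectation under the false one.

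\smallskip

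\noindent The third and final step is to match this to the DP's elimination rule. Recall the rule: if there is some $s_i$ with $|z_{ij}-\E{}{z_{ij}\mid s_i}|\leq \e/2$ (and $t_{ij}$ large enough), we drop $s_j$; otherwise we drop $s_i$. By the previous paragraph, if the true scenario is $s_1$ then the check passes for $i=1$ and fails for $i=2$ with probability at least $1-\delta$, so $s_2$ is correctly eliminated; symmetrically if $s_2$ is the true scenario. Thus one of the two scenarios is eliminated, and the eliminated one is never the true scenario except with probability $\delta$, which is the claim.

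\smallskip

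\noindent The only subtlety worth flagging is independence: we need the $X_k$'s drawn across successive opened boxes to be independent conditional on the realized scenario, which holds because each scenario is a product distribution over boxes and the DP opens each box at most once; adaptivity in choosing \emph{which} informative box to open next does not break this because the bound from Hoeffding holds uniformly over any adapted stopping rule with $t \geq \log(1/\delta)/\e^2$ (a standard martingale/Azuma variant if one prefers, but already true here since the number of opened boxes is fixed by the DP state before inspection).
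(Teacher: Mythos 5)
Your proof is correct and takes essentially the same approach as the paper: define the indicator $X_k$ of whether the observed value lies in the set where $s_1$ is more likely, use the TV-distance assumption to get an $\e$-separation in conditional means, and apply Hoeffding with threshold $\e/2$. The only addition is the (sound and welcome) remark on conditional independence under the product distribution, which the paper leaves implicit.
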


	We defer the proof of this lemma and Theorem~\ref{thm:dp} to Section~\ref{sec:apn_mixt} of the
	Appendix.

\bibliographystyle{alpha}
\bibliography{reference}
\newpage 

\appendix

\section{A Naive Reduction to $\pbT$}\label{subsec:naive}
In this section we present a straightforward reduction from \pbText{} to $\pbT$  as an alternative to Theorem~\ref{thm:pb_to_mssc}. This reduction has a simpler construction compared to the reduction of Section~\ref{sec:pb_to_mssc}, and does not lose a logarithmic factor in the approximation, it however faces the following issues.

\begin{enumerate}
\item It incurs an extra computational cost, since it adds a number of boxes that depends on the size of the values' support.
\item It requires opening costs, which means that the oracle for \pbText{} with outside option should be able to handle non-unit costs on the boxes, even if the original $\pb$ problem had unit-cost boxes.
\end{enumerate}

We denote by $\pbTc$ the version of \pbText{} with outside option that has \textbf{non-unit} cost boxes, and formally state the guarantees of our naive reduction below.

\begin{restatable}{theorem}{pbvToPbZero}\label{thm:pbv_to_pb0}
For $n$ boxes and $m$ scenarios,
given an $\alpha(n,m)$-approximation algorithm for $\pbTc$ for arbitrary $T$, there
exits a $2\alpha(n\cdot |\text{supp}(v)|,m)$-approximation for $\pbv$ that runs in polynomial time in the
number of scenarios, number of boxes, and the number of values.
\end{restatable}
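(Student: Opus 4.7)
The plan is to reduce $\pbv$ on an instance $\mathcal{I}$ to a single $\pbTc$ instance $\mathcal{I}'$ of size $n\cdot|\supp(v)|$. For each original box $i\in[n]$ and each value $v\in\supp(v)$, I include in $\mathcal{I}'$ a box $b_{i,v}$ of opening cost $c_i+v$ whose value in scenario $s$ is $0$ (covering) if $v_{is}\le v$ and $\infty$ otherwise. The $\pbTc$ threshold $T$ is taken large enough that the outside option is dominated by the trivial ``cover everything with $b_{i,v_{\max}}$'' policy, so it is never exercised at the optimum. The intuition is that opening $b_{i,v}$ simulates both probing box $i$ (cost $c_i$) and committing to stop with value at most $v$ (cost $v$), collapsing the two components of the $\pbv$ objective into the single probing objective of $\pbTc$.

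The proof has two parts. First (Lemma A), I show $\opt(\mathcal{I}')\le 2\,\opt(\mathcal{I})$. Take the optimal $\pbv$ policy $\pi^*$; in scenario $s$ it opens a set $I^*(s)$ with probing cost $C^*(s)$ and commits to value $v^*(s)=v_{i^*(s),s}$ at argmin box $i^*(s)$. I construct $\pi'$ on $\mathcal{I}'$ that first explores by opening $b_{i,v}$'s for $i\in I^*(s)$ with thresholds $v$ chosen as the predecessor of the true $v_{is}$ in $\supp(v)$, so each such probe returns $\infty$ and does not cover; this yields enough feedback to identify the scenario, with total exploration cost bounded by $C^*(s)+v^*(s)$. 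Then $\pi'$ commits by opening $b_{i^*(s),v^*(s)}$, adding $c_{i^*(s)}+v^*(s)\le C^*(s)+v^*(s)$. Summing gives $\mathrm{cost}(\pi';s)\le 2(C^*(s)+v^*(s))=2\,\mathrm{cost}(\pi^*;s)$, and taking expectations yields the claim.

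Second (Lemma B), I show that any $\pbTc$ policy $\pi'$ for $\mathcal{I}'$ induces a $\pbv$ policy $\pi$ for $\mathcal{I}$ with $\mathrm{cost}(\pi)\le\mathrm{cost}(\pi')$. Whenever $\pi'$ opens $b_{i,v}$, $\pi$ opens box $i$ in $\mathcal{I}$ if it has not been opened before, observes $v_{is}$, and supplies the correct $\{0,\infty\}$ signal back to $\pi'$; when $\pi'$ terminates by covering with some $b_{i^\dagger,v^\dagger}$, $\pi$ stops and commits to the minimum value observed, which is at most $v^\dagger$. Each distinct $c_i$ contributes at most once to $\pi$'s probing cost (while $\pi'$ pays $c_i+v$ at least once per query of $b_{i,v}$), and the value $\pi$ commits to is at most the threshold of the final covering probe, so the inequality follows by direct cost accounting. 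Combining Lemmas A and B with the $\alpha$-approximate $\pbTc$ oracle applied to $\mathcal{I}'$ gives a $\pbv$ policy of cost at most $\alpha(n|\supp(v)|,m)\cdot\opt(\mathcal{I}')\le 2\alpha(n|\supp(v)|,m)\cdot\opt(\mathcal{I})$, proving the theorem.

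The main obstacle is Lemma A, specifically designing $\pi'$'s exploration phase to be simultaneously informative (enough to identify the argmin box and minimum value of the realized scenario) and parsimonious (total cost of the $v$-terms bounded by $O(C^*(s)+v^*(s))$). The delicate point is that each exploratory probe $b_{i,v}$ contributes an extra $v$ beyond $c_i$, and summing these $v$ contributions over all probes per scenario needs to be controlled. The resolution is to always pick $v$ strictly below the true value $v_{is}$ so the probe returns $\infty$, while ensuring that the cumulative $v$-contribution can be charged against $v^*(s)$ using a monotonicity/charging argument on the adaptive order in which $\pi^*$ opens boxes.
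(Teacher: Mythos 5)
Your construction differs from the paper's in a way that is fatal to Lemma A. The paper's instance $\mathcal{I}'$ contains \emph{two} kinds of boxes: (i) the $n$ original boxes with all values shifted above $T$ (so opening box $i$ costs only $c_i$ and reveals the full value $v_{is}$, giving exactly the branching power of the original instance while never covering), and (ii) $n\cdot|\supp(v)|$ ``final'' boxes $(i,v)$ of cost $c_i+v$ that cover scenario $s$ iff $v_{is}=v$. Information gathering is done purely with the cheap type-(i) boxes; the expensive type-(ii) boxes are used exactly once, at the leaf, to commit. You dropped the type-(i) boxes entirely and make $b_{i,v}$ carry both roles, and this is where the argument breaks.

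Concretely, your Lemma A claim $\opt(\mathcal{I}')\le 2\,\opt(\mathcal{I})$ is false for your $\mathcal{I}'$. Take $m$ uniform scenarios, unit costs, $\supp(v)=\{0,1,\dots,m,N\}$. Box $1$ is a pure ``classifier'': $v_{1,s_i}=i$. Boxes $2,\dots,m+1$ have $v_{j,s_i}=0$ if $j=i+1$ and $v_{j,s_i}=N$ otherwise. The optimal $\pbv$ policy opens box $1$, observes $i$, opens box $i+1$, pays $2$; so $\opt(\mathcal{I})=2$. In your $\mathcal{I}'$, any probe of $b_{1,v}$ costs $1+v$, and to separate $s_i$ from $s_{i+1}$ using box $1$ you must pay at least $1+i$, which already exceeds the target; the $b_{j,0}$ ($j\ge 2$) boxes give only membership bits, forcing a linear search with expected cost $\Theta(m)$. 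Hence $\opt(\mathcal{I}')=\Theta(m)\cdot\opt(\mathcal{I})$. The precise failure of your charging argument is exactly this: when $\pi^*$ opens a box and observes a \emph{large} value purely for information (paying only $c_i$), every $b_{i,v}$ that emulates that observation must pay $c_i+v$ with $v$ comparable to the observed value, and those $v$-terms cannot be charged to $v^*(s)=\min_j v_{js}$, which can be $0$. Relatedly, your exploration phase is not a valid policy as written: you choose $v=\mathrm{pred}(v_{is})$, which depends on the realized $s$ the policy does not know, and all such probes return $\infty$, so the decision tree has no branching information at all. Your Lemma B is fine and is essentially the paper's pruning lemma; the fix for Lemma A is to re-introduce the shifted original boxes as in the paper, so that exploration costs only $\sum_{i\in I^*(s)}c_i$ and a single final box at the leaf contributes $c_{i^*}+v^*(s)\le \mathrm{cost}(\pi^*;s)$, giving the factor $2$.
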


Figure~\ref{fig:tool} summarizes all the reductions from $\pb$ to $\pbT$ and in Table~\ref{table:removing_vals} we compare the properties of the naive reduction of this section, to the ones show in Section~\ref{sec:pb_to_mssc}. The main differences are that there is a blow-up in the number of boxes that depends on the support, while losing only constant factors in the approximation.

\begin{figure}[H]
    \centering
\pgfmathsetmacro{\dist}{3}
\pgfmathsetmacro{\distV}{2}
\pgfmathsetmacro{\offset}{4}
\pgfmathsetmacro{\ypomnimaX}{6.5}
\pgfmathsetmacro{\ypomnimaY}{0}
\begin{tikzpicture}

	\node (upbT) at (-\offset,\distV){{\large $\upbT$}};
 
	\node (pbv) at (0,0){{\large $\pbv$}};

	\node (pbT) at (0,\distV){{\large $\pbT$}};
	\node (pbTc) at (\offset,\distV){{\large $\pbTc$}};
	
	 \draw[->,dotted, gray, thick, opacity=0.6] (pbT) -- (pbTc);
    \draw[<-,dotted, gray, thick, opacity=0.6] (pbT) -- (upbT);

	 \draw [->,very thick, dashed] (pbv) -- node [above, left] {Main Lem. \ref{thm:pbv_to_upb0_log}\,\,} (upbT);

	 \draw [->,very thick, dashed] (pbv) -- node [above] {Main Lem. \ref{thm:pbv_to_pb0_log}} (pbT);
	 	 
	 	\draw [->, thick] (pbv) -- node [above, right, thick] {Thm \ref{thm:pbv_to_pb0}} (pbTc);

     \draw[->,very thick, dashed] (\ypomnimaX-1, \ypomnimaY+0.8) -- node[above] {{\footnotesize Main Lemma ($\log$ factors)}} (\ypomnimaX+1, \ypomnimaY+0.8);
    
    \draw[->,thick] (\ypomnimaX-1, \ypomnimaY) -- node[above] {{\footnotesize Main Theorem (const. factors)}} (\ypomnimaX+1, \ypomnimaY);
        
    
	  \draw[->,dotted, gray, thick, opacity=0.6] (\ypomnimaX-1, \ypomnimaY-0.8) -- node[above] {{\footnotesize Subproblem}} (\ypomnimaX+1, \ypomnimaY-0.8);
	 
\end{tikzpicture}
    \caption{Reductions shown in Section~\ref{subsec:pbt}}
    \label{fig:tool}
\end{figure}

\begin{table}[H]
\centering
\begin{tabular}{ccc}
\textbf{}                                            & \multicolumn{2}{c}{\textbf{Reducing $\pbv$ to}}                                                                                        \\ \cline{2-3} 
\multicolumn{1}{c|}{}                                & \multicolumn{1}{c|}{\textbf{$\pbTc$, Theorem~\ref{thm:pbv_to_pb0}}}                           & \multicolumn{1}{c|}{\textbf{$(\mathcal{U})\pbT$, Main Lemma~\ref{thm:pbv_to_pb0_log} (\ref{thm:pbv_to_upb0_log})}}                           \\ \hline
\multicolumn{1}{|c|}{\textbf{Costs of boxes}}                 & \multicolumn{1}{c|}{ Introduces non-unit costs}        & \multicolumn{1}{c|}{ Maintains costs}        \\ \hline
\multicolumn{1}{|c|}{\textbf{Probabilities}}         & \multicolumn{1}{c|}{ Maintains probabilities} & \multicolumn{1}{c|}{ \begin{tabular}[c]{@{}c@{}}Maintains probabilities  \\ (Makes probabilities uniform)\end{tabular}} \\ \hline
\multicolumn{1}{|c|}{\textbf{\# of extra scenarios}} & \multicolumn{1}{c|}{ 0}                      & \multicolumn{1}{c|}{ 0}                 \\ \hline
\multicolumn{1}{|c|}{\textbf{\# of extra boxes}}     & \multicolumn{1}{c|}{ $n \cdot \supp(v)$}          & \multicolumn{1}{c|}{ 0}                     \\ \hline
\multicolumn{1}{|c|}{\textbf{Approximation loss}} & \multicolumn{1}{c|}{ $2\alpha(n\cdot \supp(v),m)$} & \multicolumn{1}{c|}{ $O(\alpha(n,m) \log a(n,m))$}                       \\ \hline
\end{tabular}
\caption{Differences of reductions of Theorems~\ref{thm:pbv_to_pb0}, and the Main Lemmas
\ref{thm:pbv_to_pb0_log} and \ref{thm:pbv_to_upb0_log} that comprise Theorem~\ref{thm:pb_to_mssc}.}
\label{table:removing_vals}
\end{table}

The main idea is that we can move the information about the values contained in the boxes into
the cost of the boxes. We do achieve this effect by creating one new box for every (box, value)-pair. Note, that doing this risks losing the information about the realized scenario that the original boxes revealed. To retain this information, we keep the original boxes, but replace their values by high values. The high values guarantee the effect of the new boxes is retained. Now, we can formalize this intuition.

\paragraph{$\pbT$ Instance.}
Given an instance $\mathcal{I}$
of
$\pbv$, we construct an instance $\mathcal{I}'$
of $\pbT$. We need $T$ to be sufficiently large so that the outside option is never chosen. The net effect is that a policy for
$\pbv$ is easily inferred from a policy for $\pbT$. We define the instance $\mathcal{I}'$ to have the same scenarios $s_i$ and same scenario probabilities $p_i$ as $\mathcal{I}$. We choose $T=\infty$\footnote{We set $T$ to a value larger than $\sum_i c_i + \max_{i,j} v_{ij}$.}, and define the new values by
$v'_{i,j} = v_{i,j} + T + 1$. Note that all of these values will be larger than $T$
and so a feasible policy cannot terminate after receiving such a value. At the same time, these values ensure the same branching behaviour as before since each distinct value is mapped one to one to a new distinct value. Next, we add additional ``final" boxes for
each pair $(j,v)$ where $j$ is a box and $v$ a potential value of box $j$. Each ``final"
box $(j,v)$ has cost $c_j + v$. Box $(j,v)$ has value $0$ for the scenarios where box $j$
gives exactly value $v$ and values $T+1$ for all other scenarios. Formally,
\begin{align*}
    v'_{i,(j,v)} = \begin{cases}
        0 & \text{if } v_{i,j} = v\\
		T+1 & \text{else}
    \end{cases}
\end{align*}
Intuitively, these ``final'' boxes indicate to a policy that this will be the last box opened, and so its values, which are at least that of the best values of the boxes chosen, should now be taken into account in the cost of the solution. 
 
In order to prove Theorem~\ref{thm:pbv_to_pb0}, we use two key lemmas. In Lemma~\ref{lem:pbv_to_pb0_opts} we show that the optimal value for the transformed instance $\mathcal{I}'$ of $\pbT$ is not much higher than the optimal value for original instance $\mathcal{I}$. In
Lemma~\ref{lem:pbv_to_pb0_policies} we show how to obtain a policy for the
initial instance with values, given a policy for the problem with a threshold.

\begin{lemma}\label{lem:pbv_to_pb0_opts}
	Given the instance $\mathcal{I}$ of $\pbv$ and the constructed instance $\mathcal{I}'$
	of $\pbT$ it holds
	that
	\[
		c(\pi^*_{\mathcal{I}'}) \leq 2 c(\pi^*_{\mathcal{I}}).
	\]
\end{lemma}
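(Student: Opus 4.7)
The plan is to transform the optimal $\pbv$ policy $\pi^*_\mathcal{I}$ into a feasible policy $\pi'$ for the constructed $\pbT$ instance $\mathcal{I}'$ whose cost on every scenario is at most twice that of $\pi^*_\mathcal{I}$. Since $\pi^*_{\mathcal{I}'}$ is optimal for $\mathcal{I}'$, this yields $c(\pi^*_{\mathcal{I}'}) \le c(\pi') \le 2\,c(\pi^*_\mathcal{I})$ after taking expectation over scenarios.

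For the construction, I would proceed scenario-by-scenario. Fix a realized scenario $s_j$, and let $O_j$ denote the set of boxes opened by $\pi^*_\mathcal{I}$ under $s_j$, let $v^*_j = \min_{i \in O_j} v_{ij}$ be the value $\pi^*_\mathcal{I}$ selects, and let $i^*_j \in O_j$ be the box achieving this minimum. The policy $\pi'$ first simulates $\pi^*_\mathcal{I}$ on the shifted copies of the original boxes: every box $i$ in $\mathcal{I}'$ has the same cost $c_i$ and reveals $v_{ij}+T+1$, which is in bijection with the underlying $v_{ij}$, so $\pi'$ can reproduce exactly the branching decisions of $\pi^*_\mathcal{I}$. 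Since all revealed values exceed $T$, $\pi'$ never stops during this simulation phase. At the step where $\pi^*_\mathcal{I}$ would have stopped and committed to $(i^*_j, v^*_j)$, $\pi'$ instead opens the \emph{final} box $(i^*_j, v^*_j)$, which has cost $c_{i^*_j} + v^*_j$ and reveals value $0 \le T$, so $\pi'$ successfully covers the scenario without ever resorting to the outside option (this is why we need $T$ chosen large, so the outside option isn't cheaper than any branch).

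The cost analysis on a single scenario $s_j$ is then immediate: $\pi'$ pays $\sum_{i \in O_j} c_i$ for the simulation plus $c_{i^*_j} + v^*_j$ for the final box, giving
\[
c(\pi'(s_j)) \;=\; \sum_{i \in O_j} c_i + c_{i^*_j} + v^*_j \;\le\; 2\sum_{i \in O_j} c_i + v^*_j \;\le\; 2\Bigl(\sum_{i \in O_j} c_i + v^*_j\Bigr) \;=\; 2\,c(\pi^*_\mathcal{I}(s_j)),
\]
where the first inequality uses $i^*_j \in O_j$ together with non-negativity of costs. Averaging with the scenario probabilities $p_j$ (which are preserved in the reduction) gives $c(\pi') \le 2\,c(\pi^*_\mathcal{I})$, and optimality of $\pi^*_{\mathcal{I}'}$ finishes the proof.

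There is no real obstacle here; the only subtle point to verify is that $\pi'$ actually has the information needed to pick the correct final box $(i^*_j, v^*_j)$ at the right moment. This is fine because, at that moment, $\pi'$ has observed the shifted values of all boxes in $O_j$, and these determine $v^*_j$ and $i^*_j$ just as they do for $\pi^*_\mathcal{I}$ on the original instance.
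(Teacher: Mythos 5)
Your proof is correct and follows essentially the same route as the paper's: both construct a feasible policy for $\mathcal{I}'$ by replaying $\pi^*_{\mathcal{I}}$ on the shifted boxes and then opening the final box $(i^*_j, v^*_j)$, and both bound the extra cost $c_{i^*_j} + v^*_j$ by the original cost to get the factor of two. The only cosmetic difference is that you bound $c_{i^*_j}$ by $\sum_{i\in O_j} c_i$ and $v^*_j$ by itself separately, whereas the paper bounds $c_{i^*_j}$ by the full $c(\pi^*_{\mathcal{I}}(s_j))$ in one step; these are the same argument.
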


\begin{proof}
	We show that given an optimal policy for $\pbv$, we can construct a
	feasible policy $\pi'$ for $\mathcal{I}'$ such that
	$c(\pi_{0}) \leq 2c(\pi^*_{\mathcal{I}})$.  We construct the policy
	$\pi'$ by opening the same boxes as $\pi$ and
	finally opening the corresponding ``values" box, in order to find the $0$ needed to stop.  
	
	Fix any scenario $i$, and suppose box $j$ achieved the smallest value $V_{i,j}$ of all boxes opened under scenario $i$. Since $j$ is opened, in the instance
	$\mathcal{I}'$ we open box $(j,v_{i,j})$, and from the construction of
	$\mathcal{I}'$ we have that $v'_{i,(j,v_{i,j})} = 0$. Since on every
	branch we open a box with values $0$\footnote{$\pi$ opens at least
	one box.}, we see that $\pi'$ is a feasible policy for $\mathcal{I}'$. 
	Under scenario $i$, we have that the cost of $\pi(i)$ is 
	\[
		c(\pi(i)) = \min_{k \in \pi(i)} v_{i,k} + \sum_{k \in \pi(i)} c_k .
	\]
	In contrast, the minimum cost following $\pi'(i)$ is $0$ and there is the
	additional cost of the ``values" box. Formally, the cost of $\pi'(i)$ is
	\[
		c(\pi'(i)) =  0 + \sum_{k \in \pi(i)} c_k + c_{(j,v_{i,j})} 
		= \min_{k \in \pi(i)} v_{i,k} + \sum_{k \in \pi(i)} c_k + c_j 
		= c(\pi(i)) + c_j
		\]
	Since $c_j$ appears in the cost of $\pi(i)$, we know that $c(\pi(i)) \geq
	c_j$. Thus, $c(\pi'(i)) = c(\pi(i)) + c_j \leq 2c(\pi(i))$, which implies that 
	$c(\pi') \leq 2c(\pi^*_{\mathcal{I}})$ for our feasible policy $\pi'$. Observing
	that $c(\pi') \geq c(\pi^*_{\mathcal{I}'})$ for any policy, completes the proof.
\end{proof}

\begin{lemma}\label{lem:pbv_to_pb0_policies}
	Given a policy $\pi'$ for the constructed instance $\mathcal{I}'$ of $\pbT$, there exists a feasible
	policy $\pi$ for the instance $\mathcal{I}$ of $\pbv$ with no larger expected cost.
	Furthermore, any branch of $\pi$ can be constructed from $\pi'$ in polynomial time.
\end{lemma}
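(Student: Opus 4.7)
The plan is to define $\pi$ by simulating $\pi'$ step-by-step on the true scenario in $\mathcal{I}$, using the fact that the observations made by $\pi$ in $\mathcal{I}$ determine the observations $\pi'$ would make in $\mathcal{I}'$. For each box $i$ opened in $\mathcal{I}$ with value $v_{i,j}$, the corresponding opening in $\mathcal{I}'$ would reveal $v_{i,j}+T+1$, so there is a bijection between information states; $\pi$ can therefore reproduce $\pi'$'s branching decisions. The only new issue is how $\pi$ should behave when $\pi'$ opens a ``final'' box $(k,v)$ whose sole role is to signal whether $v_{k,j}=v$.

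First I would specify $\pi$ as follows. Maintain a running set $O$ of original boxes already opened by $\pi$ together with their observed values. Whenever $\pi'$'s simulated tree asks to open an original box $i$, $\pi$ opens $i$ (adding it to $O$) and feeds the value $v_{i,j}+T+1$ to the simulation. Whenever $\pi'$ asks to open a final box $(k,v)$, $\pi$ first ensures that $k\in O$ (opening $k$ and paying $c_k$ if necessary), and then checks whether $v_{k,j}=v$. If so, $\pi$ halts and takes value $v_{k,j}=v$ (mirroring $\pi'$'s termination at value $0$); if not, $\pi$ feeds $T+1$ into the simulation and continues. Each branch of this construction is computed from the corresponding branch of $\pi'$ in time linear in the length of that branch, giving the polynomial-time claim.

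For the cost comparison, fix any scenario $j$. Let $O_j$ denote the set of original boxes $\pi'$ opens on $j$, and let $F_j$ denote the list of final boxes $\pi'$ opens on $j$; by construction the last element $(k^*,v^*)\in F_j$ satisfies $v_{k^*,j}=v^*$, and $\pi'$ pays
\[
c(\pi'(j)) \;=\; \sum_{i\in O_j} c_i \;+\; \sum_{(k,v)\in F_j} (c_k+v).
\]
On the $\pi$ side, each box opened is charged at most once, and $\pi$ stops with value at most $v^*$, so
\[
c(\pi(j)) \;\le\; \sum_{i\in O_j} c_i \;+\; \sum_{(k,v)\in F_j} c_k \;+\; v^*.
\]
Since $v^*$ appears among the nonnegative terms $\{v : (k,v)\in F_j\}$, the right-hand side is at most $c(\pi'(j))$. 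Taking expectation over $j$ gives $c(\pi)\le c(\pi')$, which is what we need.

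The main conceptual obstacle is making sure $\pi$ is a genuinely feasible adaptive policy rather than a non-adaptive transcript: one must argue that the decisions driven by the simulation depend only on information $\pi$ has actually observed, and that the auxiliary openings triggered by final-box queries never need to be ``undone.'' The bijection between observed values in $\mathcal{I}$ and $\mathcal{I}'$, together with the fact that opening box $k$ strictly increases $\pi$'s information, handles both points; the rest of the argument is the cost bookkeeping above.
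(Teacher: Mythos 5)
Your proof is correct and follows essentially the same construction as the paper: whenever $\pi'$ queries a final box $(k,v)$, replace it by opening the original box $k$ (if not already open), route the $v$-outcome along $\pi'$'s $0$-branch and every other outcome along $\pi'$'s $T{+}1$-branch, and charge the value $v^*$ that $\pi$ ends up paying against the extra $+v^*$ that $\pi'$ paid in the cost $c_{k^*}+v^*$ of the final box it terminated on. The one cosmetic difference is that the paper argues by a recursive tree-surgery, peeling off one final box at a time and tracking how the cost of the resulting subtree changes, whereas you give a direct per-scenario bookkeeping inequality $c(\pi(j)) \le \sum_{i\in O_j} c_i + \sum_{(k,v)\in F_j} c_k + v^* \le c(\pi'(j))$; the latter is a bit cleaner to verify but is the same underlying accounting.
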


\begin{proof}[Proof of Lemma~\ref{lem:pbv_to_pb0_policies}]
	We construct a policy $\pi$ for $\mathcal{I}$ using the policy $\pi'$.  Fix
	some branch of $\pi'$. If $\pi'$ opens box $j$ along this branch, we define policy $\pi$ to open the same box along this branch. When $\pi'$ opens a ``final'' box $(j,v)$, we define the policy $\pi$
	to open box $j$ if it has not been opened already. 
 
    Next, we show this policy $\pi$ has no larger expected cost than $\pi'$. There are two cases to
	consider depending on where the ``final'' box $(j,v)$ is opened:
	\begin{enumerate}
		\item ``Final" box $(j, v)$ is at a leaf of $\pi'$:
				since $\pi'$ has finite expected cost and this is the first
				``final" box we encountered, the result must be $0$. Therefore,
				under $\pi$ the values will be $v$ by definition of $\mathcal{I}'$. Observe
				that in this case, $c(\pi) \leq c(\pi')$ since the (at most)
				extra $v$ paid by $\pi$ for the value term, has already been paid by the box cost in $\pi'$
				when box $(j,v)$ was opened.
		\item ``Final" box $(j,v)$ is at an intermediate node of $\pi'$: after $\pi$ opens box $j$, 
				we copy the subtree of $\pi'$ that follows the $0$
				branch into the branch of $\pi$ that follows the $v$ branch. Also, we copy
				the subtree of $\pi'$ that follows the
				$\infty_1$ branch into each branch that has a value different
				from $v$ (the non-$v$ branches).  
				The cost of this new
				subtree is $c_j$ instead of the original $c_j + v$. The $v$ branch may
				accrue an additional cost of $v$ or smaller if $j$ was not the
				smallest values box on this branch, so in total, the
				$v$ branch has cost at most its original cost. 
        
                However, the non-$v$
				branches have a $v$ term removed going down the tree. 
				Specifically, since the feedback of $(j,v)$ down the non-$v$ branch
				was $\infty_1$, some other box with $0$ values had to be
				opened at some point, and this box is still available to be used
				as the final values for this branch later on (since if this
				branch already had a $0$, it would have stopped). Thus, the
				cost of this subtree is at most that originally, and has one
				fewer ``final" box opened. 
	\end{enumerate}
    Putting these cases together implies that $c(\pi) \leq c(\pi')$. 

    Lastly, we argue that any branch of $\pi$ can be computed efficiently. To compute a branch for $\pi$, we follow the corresponding branch of $\pi'$. As we go along this branch, we open box $j$ whenever $\pi'$ opens box $(j,v)$ and remember the feedback. We use the feedback to know which boxes of $\pi'$ to open in the future. Hence, we can compute a branch of $\pi$ from $\pi'$ in polynomial time.
\end{proof}

We are now ready to give the proof of Lemma~\ref{thm:pbv_to_pb0}.
\begin{proof}[Proof of Lemma~\ref{thm:pbv_to_pb0}] 
		Suppose we have an $\alpha$-approximation for $\pbT$. Given an instance
		$\mathcal{I}$ to $\pbv$, we construct the instance $\mathcal{I}'$ for
		$\pbT$ as described and then run the approximation algorithm on
		$\mathcal{I}'$ to get a policy $\pi_\mathcal{I'}$. Next, we prune the tree as
		described in Lemma~\ref{lem:pbv_to_pb0_policies} to get a policy,
		$\pi_\mathcal{I}$ of no worse cost. Our policy will use
		time at most polynomially more than the policy for $\pbT$ since each
		branch of $\pi_\mathcal{I}$ can be computed in polynomial time from $\pi_\mathcal{I'}$. Hence, the
		runtime is polynomial in the size of $\mathcal{I}'$. We also note that
		we added at most $mn$ total ``final'' boxes to construct our new
		instance $\mathcal{I}'$, and so this algorithm will run in polynomial
		time in $m$ and $n$. Thus, by Lemma~\ref{lem:pbv_to_pb0_policies} and
		Lemma~\ref{lem:pbv_to_pb0_opts} we know the cost of the constructed
		policy is
$$c(\pi) \leq c(\pi') \leq \alpha c(\pi^*_{\mathcal{I}'}) \leq 2 \alpha c(\pi^*_{\mathcal{I}})$$
Hence, this algorithm is a $2\alpha$-approximation for $\pbv$.

\end{proof}

\section{Proofs from Section~\ref{sec:pb_to_mssc}}\label{sec:apn_tool}

\msscfToPb*
\begin{proof}[Proof of Claim~\ref{cl:msscf_to_pb}]
Let $\mathcal{I}$ be an instance of $\msscf$. We create an instance $\mathcal{I}'$ of $\pb$ the following way: for every set $s_j$ of $\mathcal{I}$ that gives feedback $f_{ij}$ when element $e_i$ is selected, we create a scenario $s_j$ with the same probability and whose value for box $i$, is either $0$ if $e_i \in s_j$ or  $\infty_{f_{ij}}$ otherwise, where $\infty_{f_{ij}}$ denotes an extremely large value 
which is different for different values of the feedback $f_{ij}$. Observe that any solution to the $\pb$ instance gives a solution to the $\msscf$ at the same cost and vice versa.
\end{proof}

\upbTToUmsscf*
Before formally proving this claim, recall the correspondence of scenarios and boxes of PB-type problems, to elements and sets of \textsc{MSSC}-type problems. The idea for the reduction is to create $T$ copies of sets for each scenario in the initial $\pbT$ instance and one element per box, where if the price a box gives for a scenario $i$ is $<T$ then the corresponding element belongs to all $T$ copies of the set $i$. The final step is to “simulate” the outside option $T$, for which we we create $T$ elements where the $k$’th one belongs only to the $k$’th copy of each set.

\begin{proof}[Proof of Claim~\ref{thm:upbT_to_umsscf}]
  Given an instance $\mathcal{I}$ of $\upbT$ with outside cost box $b_T$, we construct the instance $\mathcal{I}'$ of $\umsscf$ as follows.  
	
	\paragraph{Construction of the instance.} For every scenario $s_i$ in the initial instance, we create $T$ sets denoted by $s_{ik}$ where $k \in [T]$.  Each of these sets has equal probability $p_{ik} = 1/(mT)$. We additionally create one element $e^B$ per box $B$, which belongs to every set $s_{ik}$ for all $k$ iff $v_{Bi} <T$ in the initial instance, otherwise gives feedback $v_{Bi}$.
	In order to simulate box $b_T$ without introducing an element with non-unit cost, we use a sequence of $T$ \emph{outside option} elements $e^T_{k}$ where $e^T_{k}\in s_{ik}$ for all $i\in [m]$ i.e. element $e^T_{ik}$ belongs to ``copy $k$" of every set \footnote{Observe that there are exactly $T$ possible options for $k$ for any set. Choosing all these elements costs $T$ and covers all sets thus simulating $b_T$.}.

	\paragraph{Construction of the policy.}	
	We construct policy $\pi_\mathcal{I}$ by
	ignoring any outside option elements that $\pi_\mathcal{I'}$ selects until
	$\pi_\mathcal{I'}$ has chosen at least $T/2$ such elements, at which point
	$\pi_\mathcal{I}$ takes the outside option box $b_T$. To show
	feasibility we need that for every scenario either $b_T$ is chosen or some box with $v_{ij}\leq T$. If $b_T$ is not chosen, then less than $T/2$ isolating elements were chosen, therefore in instance of $\umsscf$ some sub-sets will have to be covered by another element $e^B$, corresponding to a box. This corresponding box however gives a value $\leq T$ in the initial $\upbT$ instance.

	\paragraph{Approximation ratio.} 

	Let $s_i$ be any scenario in $\mathcal{I}$. We distinguish between the following cases,
	depending on whether there are outside option tests on $s_i$'s branch.
	\begin{enumerate}
	\item \textbf{No outside option tests} on $s_i$'s branch: scenario $s_i$ contributes equally in both policies, since absence of \emph{isolating elements} implies that all copies of scenario $s_i$ will be on the same branch (paying the same cost) in both $\pi_{\mathcal{I}'}$ and $\pi_\mathcal{I}$ 
%
		\item  \textbf{Some outside option tests} on $i$'s branch:  for this
				case, from Lemma~\ref{lem:pbT_to_udt_scens_outside} we have that $c(\pi_{\mathcal{I}}(s_i)) \leq 3 c(\pi_{\mathcal{I}'}(s_i))$.	
	\end{enumerate}

 Putting it all together we get 
	\[
		c(\pi_\mathcal{I}) \leq 3 c(\pi_\mathcal{I'}) 
			\leq 2 \alpha(n+m,m^2) c(\pi^*_\mathcal{I'}) \leq 3\alpha(n+m,m^2) c(\pi^*_\mathcal{I}) ,
	\] 
	where the second inequality follows since we are given an $\alpha$
	approximation and the last inequality since if we are given an optimal
	policy for $\upbT$, the exact same policy is also feasible for
	any $\mathcal{I'}$ instance of $\udt$, which has cost at least $c(\pi^*_\mathcal{I'})$. We also used that $T\leq m$, since otherwise the initial policy would never take the outside option.
\end{proof}

\begin{lemma}\label{lem:pbT_to_udt_scens_outside}
	Let $\mathcal{I}$ be an instance of $\upbT$, and $\mathcal{I'}$ the instance
	of $\umsscf$ constructed by the reduction of
	Claim~\ref{thm:upbT_to_umsscf}. For a scenario $s_i$, if there is at least one outside option test run in
	$\pi_\mathcal{I}$, then $c(\pi_\mathcal{I}(s_i)) \leq 3c(\pi_\mathcal{I'}(s_i))$.
\end{lemma}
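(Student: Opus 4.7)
The plan is to express both $c(\pi_\mathcal{I}(s_i))$ and $c(\pi_{\mathcal{I}'}(s_i))$ in terms of one common quantity, namely the position, along $s_i$'s branch of $\pi_{\mathcal{I}'}$, at which the $(T/2)$-th outside-option element is selected. I would first observe that, by the construction in Theorem~\ref{thm:upbT_to_umsscf}, every element $e^B$ gives feedback that depends only on $i$ and $v_{Bi}$, while an outside-option element $e^T_{k'}$ covers $s_{ik}$ iff $k' = k$. Hence all $T$ copies $s_{i1},\dots,s_{iT}$ of $s_i$ follow the same branch of $\pi_{\mathcal{I}'}$, and on that branch the outside-option elements appear in some fixed order $e^T_{k_1}, e^T_{k_2},\dots$ at positions $N_1 < N_2 < \cdots$ among the elements opened.

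I would then compute $c(\pi_\mathcal{I}(s_i))$ exactly. By the construction of $\pi_\mathcal{I}$, it opens precisely those real boxes that $\pi_{\mathcal{I}'}$ opens up to the $(T/2)$-th outside option and then selects $b_T$ (cost $T$). Writing $N := N_{T/2}$, exactly $T/2$ of the first $N$ elements on the branch are outside options, so $c(\pi_\mathcal{I}(s_i)) = (N - T/2) + T = N + T/2$.

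The main step is a lower bound on $c(\pi_{\mathcal{I}'}(s_i)) = \frac{1}{T}\sum_{k=1}^{T} c(\pi_{\mathcal{I}'}(s_{ik}))$, which I would obtain by partitioning the $T$ copies of $s_i$ into two groups. The $T/2$ copies $s_{ik_1},\dots,s_{ik_{T/2}}$ that are covered by the first $T/2$ outside options contribute $N_1 + \cdots + N_{T/2} \geq 1 + 2 + \cdots + T/2 \geq T^2/8$, using the trivial bound $N_j \geq j$. The remaining $T/2$ copies are only covered after position $N$ -- either by a later outside option $e^T_{k_j}$ with $j > T/2$, or by a real box $e^B$ with $v_{Bi} < T$ opened at some position $M > N$ -- and thus each contributes at least $N$. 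Summing, $\sum_k c(\pi_{\mathcal{I}'}(s_{ik})) \geq TN/2 + T^2/8$, so $c(\pi_{\mathcal{I}'}(s_i)) \geq N/2 + T/8$.

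To conclude, I would combine the two estimates: $c(\pi_\mathcal{I}(s_i)) = N + T/2$ and $c(\pi_{\mathcal{I}'}(s_i)) \geq N/2 + T/8$; using the trivial bound $N \geq T/2$ (opening $T/2$ outside options requires at least $T/2$ total steps), the inequality $N + T/2 \leq 3(N/2 + T/8)$ simplifies to $N \geq T/4$, which holds. The main subtlety is justifying that any real box with $v_{Bi} < T$ on the branch is opened \emph{after} position $N$; this is where I would use the lemma's hypothesis that $\pi_\mathcal{I}$ actually takes the outside option for $s_i$, because otherwise such a real box would have given a value below $T$ in $\pi_\mathcal{I}$ before $b_T$ was reached, contradicting the assumption that the outside option is actually used.
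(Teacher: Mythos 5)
Your argument for the case that $\pi_{\mathcal{I}}$ actually takes the outside option (equivalently, $\pi_{\mathcal{I}'}$'s branch for $s_i$ contains at least $T/2$ outside-option elements) is correct and in fact a bit cleaner than the paper's: you compute $c(\pi_{\mathcal{I}}(s_i)) = N + T/2$ exactly and lower-bound $\sum_k c(\pi_{\mathcal{I}'}(s_{ik}))$ directly via $N_j \geq j$, rather than the paper's slightly looser ``pretend all outside options come first'' lower bound. The observation that no box element with $v_{Bi} < T$ can appear before position $N$ (else $\pi_{\mathcal{I}}$ would have stopped without taking $b_T$) is exactly the right justification for the two sub-sums.

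However, there is a gap relative to what the lemma is actually used for. In the proof of Theorem~\ref{thm:upbT_to_umsscf}, the case split is ``no outside-option elements on $s_i$'s branch of $\pi_{\mathcal{I}'}$'' versus ``some outside-option elements on $s_i$'s branch,'' and this lemma is invoked for the second case. That second case includes scenarios for which $\pi_{\mathcal{I}'}$ selects at least one but \emph{fewer than} $T/2$ outside-option elements on $s_i$'s branch; there $\pi_{\mathcal{I}}$ never takes $b_T$, the quantity $N$ you define does not exist, and your entire argument does not apply. (Reading the hypothesis literally as ``$\pi_{\mathcal{I}}$ takes an outside option'' — as you do — would leave a hole in the enclosing theorem, so the intended hypothesis must be the broader one about $\pi_{\mathcal{I}'}$.) The paper treats this as a separate subcase: when fewer than $T/2$ outside-option elements appear, $\pi_{\mathcal{I}}$ pays only for the $M$ real boxes, while more than half of the $T$ copies of $s_i$ in $\mathcal{I}'$ must be covered by a box element and therefore each pay all $M$ box elements, giving $c(\pi_{\mathcal{I}'}(s_i)) \geq M/2$ and hence a factor-$2$ bound. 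You would need to add this second subcase (or an equivalent argument) for the proof to be complete.
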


	\begin{proof}
		For the branch of scenario $s_i$, denote by $M$ the box elements chosen before there were $T/2$ \emph{outside
			option} elements, and by $N$ the number of \emph{outside option} elements in $\pi_{\mathcal{I}'}$. Note that the smallest cost is achieved if  all the outside option elements are chosen
		first\footnote{Since the outside option tests cause some copies to be
		isolated and so can reduce their cost.}.
			The copies of scenario $s_i$ can be split into two groups;
		those that were isolated \textbf{before} $T/2$ \emph{outside option} elements were chosen,
		and those that were isolated \textbf{after}. We distinguish between the
		following cases, based on the value of $N$.
	
	\begin{enumerate}
		\item $N \geq T/2$: in this case each of the copies of $s_i$ that are isolated after pays at least $M
				+ T/2$ for the initial box elements and the initial sequence of
				\emph{outside option} elements. For the copies isolated before, we lower bound the
				cost by choosing all \emph{outside option} elements first.

    The cost of all the copies in $\pi_{\mathcal{I}'}$ then is at least
		\begin{align*}
		    \sum_{j = 1}^{K_i} \sum_{k = 1}^{T/2} \frac{c p_{\ell}}{T}k + \sum_{j = 1}^{K_i} \sum_{k = T/2+1}^T \frac{c p_{\ell}}{T} (T/2+M) &= cp_i \frac{ \frac{T}{2}(\frac{T}{2}+1)}{2T}  + c p_i\frac{\frac{T}{2}(T/2+M)}{T}\\
		    &\geq cp_i(3T/8 + M/2) \\
	   	 &\geq \frac{3}{8}p_i(T + M)
		\end{align*}
		Since $N \geq T/2$, policy $\pi_{\mathcal{I}}$ will take the
		outside option box for $s_i$, 
  immediately after choosing the $M$ initial boxes corresponding to the box elements.
		So, the total contribution $s_i$ has on the expected cost of $\pi_{\mathcal{I}}$ is
		at most $p_i (M+T)$ in this case. Hence, we have that $s_i$'s
		contribution in $\pi_\mathcal{I}$ is at most $\frac{8}{3} \leq 3$ times $s_i$'s
		contribution in $\pi_{\mathcal{I}'}$.
	
		\item $N < T/2$: policy $\pi_{\mathcal{I}}$ will only select
				the $M$ boxes (corresponding to \emph{box} elements) and this was sufficient for finding a value
				less than $T$. The total contribution of $s_i$ on $c(\pi_\mathcal{I})$ is exactly $p_i M$. On the other
				hand, since $N < T/2$ we know that at least half of the
				copies will pay $M$ for all of the box elements. The cost of all the copies is at least
			\[ 
   \sum_{j = 1}^{K_i} \sum_{k = N}^T \frac{cp_{\ell}}{T}M = cp_i \frac{T-N}{T}M \geq cp_i M/2 ,
   \]
				therefore, the contribution $s_i$ has on $c( \pi_{\mathcal{I}'}) $ is at least $cp_i M/2$. Hence, we have $c(\pi_\mathcal{I}) \leq 3
				c(\pi_{\mathcal{I}'})$ 
		\end{enumerate}
	\end{proof}

\subsection{Proofs from subsection~\ref{subsec:main_tool}}\label{subsec:apn_main_tool}

\thresholdBound*
\begin{proof}
    Consider a policy $\pi_\iq$ which runs $\pi^*_\mathcal{I}$ on the instance $\iq$; and for scenarios with cost $c_s\ge t_{q/(\beta\alpha)}$ aborts after spending this cost and chooses the outside option $T$. The cost of this policy is:
 
	\begin{equation}\label{eq:optz}
			c(\pi^*_\iq) \leq  c(\pi_\iq)
			= \frac{T+ t_{q/(\beta\alpha)}}{\beta \alpha}  + \sum_{ 
			\substack{ c_s \in [t_q, t_{q/(10\alpha)}]\\ s\in \scenarios }} c_s \frac{p_s}{q}  ,
	\end{equation}
	\noindent
    By our assumption on $T$, this cost is at most $2T/\beta\alpha$.
 On the other hand
	since $\mathcal{A}_T$ is an $\alpha$-approximation to the optimal we have
	that the cost of the algorithm's solution is at most
	\[
		\alpha c(\pi^*_\iq) \leq \frac{2T}{\beta}
	\]
Since the expected cost of $\mathcal{A}_T$ is at most
	$2T/\beta$, then using Markov's inequality, we get that $\Pr{}{c_s \geq T} \leq (2T/\beta)/T
	= 2/\beta$. Therefore, $\mathcal{A}_T$ covers at least $1-2/\beta$ mass every time. 
\end{proof}

\pbvOpt*
\begin{proof}
	In every interval of the form $\mathcal{I}_i =[t_{q^i},
	t_{q^i/(\beta\alpha)} ]$ the optimal policy for $\pbv$ covers at least
	$1/(\beta\alpha)$ of the probability mass that remains. Since the values
	belong in the interval $\mathcal{I}_i$ in phase $i$, it follows that the
	minimum possible value that the optimal policy might pay is $t_{q^i}$,
	i.e. the lower end of the interval. Summing up for all intervals, we get
	the lemma.
\end{proof}

\subsection{Proofs from subsection~\ref{subsec:improved_tool}}\label{sec:apn_new_tool}

\begin{algorithm}[H]
	\KwIn{Set of scenarios $\scenarios$}
	Scale all probabilities by $c$ such that $c \sum_{s\in\scenarios} p_s = 1$\\
	Let $p_{\text{min}} = \min_{s\in\scenarios} p_s$\\
	$\scenarios' = $ for each $s\in \scenarios$ create $p_s/p_{\text{min}}$ copies\\
	Each copy has probability $1/|\scenarios'|$\\
	\Return $\scenarios'$
\caption{\textbf{Expand}: rescales and returns an instance of $\upb$.}\label{algo:expand}
\end{algorithm}
	
\pbvToUpbLog*
\begin{proof}
The proof in this case follows the steps of the proof of Theorem~\ref{thm:pbv_to_pb0_log}, and we are only highlighting the changes. The process of the reduction is the same as Algorithm~\ref{algo:pbv_pb0} with the only difference that we add two extra steps; (1) we initially remove all low probability scenarios (line 3 - remove at most $c$ fraction) and (2) we add them back after running $\upbT$ (line 8). The reduction process is formally shown in Algorithm~\ref{algo:pbv_upb0}.

	\paragraph{Calculating the thresholds.}
	For every phase $i$ we choose a threshold $T_i$ such that $T_i = \min \{T :
	\Pr{}{c_s>T} \leq \delta\}$ i.e. at most $\delta$ of the probability mass of
	the scenarios are not covered, again using binary search as in Algorithm~\ref{algo:pbv_pb0}.  We denote by
	$\intv_i = [t_{(1-c)(\delta+c)^i}, t_{(1-c)(\delta+c)^i/(\beta \alpha)} ] $ the relevant interval of costs at every run of the algorithm, then by
	Lemma~\ref{lem:thresh_bound}, we know that for remaining total probability
	mass $(1-c)(\delta+c)^i$, any threshold which satisfies
	\begin{equation*}
		T_i \geq t_{(1-c)(\delta+c)^{i-1}/\beta \alpha} 
			+ \beta  \alpha \sum_{\substack{s\in \scenarios\\ c_s \in \intv_i }} c_s \frac{p_s}{(1-c)(\delta+c)^i} 
	\end{equation*}
	also satisfies the desired covering property, i.e. at least $(1-2/\beta)(1-c)(\delta+c)$ mass of the
	current scenarios is covered. Therefore the threshold
	$T_i$ found by our binary search satisfies 
 \begin{equation}\label{eq:threshold_uniform}
		T_i = t_{(1-c)(\delta+c)^{i-1}/\beta \alpha} 
			+ \beta \alpha \sum_{\substack{s\in \scenarios\\ c_s \in \intv_i }} c_s \frac{p_s}{(1-c)(\delta+c)^i} .
	\end{equation}

Following the proof of Theorem~\ref{thm:pbv_to_pb0_log}, \textbf{Constructing the final policy} and \textbf{Accounting for the values} remain exactly the same as neither of them uses the fact that the scenarios are uniform.

	\paragraph{Bounding the final cost.}
	Using the guarantee that at the end of every phase we cover $(\delta+c)$ of the
	scenarios, observe that the algorithm for $\pbT$ is run in an interval of the
	form $\intv_i = [t_{(1-c)(\delta+c)^i}, t_{(1-c)(\delta+c)^i/(\beta \alpha)} ]$. Note also
	that these intervals are overlapping. Bounding the cost of the final policy
	$\pi_\mathcal{I}$ for all intervals we get
	\begin{align*}
			\pi_\mathcal{I} &\leq  \sum_{i=0}^\infty (1-c)(\delta+c)^i T_i & \\
				 &= \sum_{i=0}^\infty \lp( (1-c)(\delta+c)^i t_{(1-c)(\delta+c)^{i-1}/\beta \alpha} 
				 + \beta  \alpha \sum_{\substack{s\in \scenarios\\ c_s \in \intv_i }} c_s p_s \rp)
			 		& \text{From equation \eqref{eq:threshold_uniform}} \\
			& \leq 2\cdot \beta\alpha \pi^*_\mathcal{I} + 
				\beta  \alpha \sum_{i=0}^\infty  \sum_{\substack{s\in \scenarios\\ c_s \in \intv_i }} c_s p_s 
					&\text{Using Lemma~\ref{lem:pbv_opt}}\\
			&\leq 2\beta \alpha \log \alpha \cdot \pi^*_\mathcal{I}, & 
	\end{align*}
	 where the inequalities follow similarly to the proof of Theorem~\ref{thm:pbv_to_pb0_log}. Choosing $c=\delta = 0.1$ and $\beta = 20$ we get the theorem.
\end{proof}

\section{Proofs from Section~\ref{sec:mssc_to_dt}}\label{sec:apn_mssc_and_odt}
\msscfToOdt*
\begin{proof}[Proof of Claim~\ref{thm:msscf_to_odt}]
    Let $\mathcal{I}$ be an instance of $\msscf$. We create an instance $\mathcal{I}'$ of $\odt$ the following way: for every set $s_j$ we create a scenario $s_j$ with the same probability and for every element $e_i$ we create a test $T_{e_i}$ with the same cost, that gives full feedback whenever an element belongs to a set, otherwise returns only the element's feedback $f_{ij}$. Formally, the $i$-test under scenario $s_j$ returns
    \[T_{e_i}(s_j) = \begin{cases}
        \text{``The feedback is $f_{ij}$''} & \text{If } e_i\not\in s_j \\
        \text{``The scenario is $j$''} & \text{else },
    \end{cases}\]
    therefore the test isolates scenario $j$ when $e_i\in s_j$.
    
\paragraph{Constructing the policy.} Given a policy $\pi'$ for the instance $\mathcal{I}'$ of $\odt$, we can construct a policy $\pi$ for $\mathcal{I}$ by selecting the element that corresponds to the test $\pi'$ chose. When $\pi'$ finishes, all scenarios are identified and for any scenario $s_j$ either (1) there is a test in $\pi'$ that corresponds to an element in $s_j$ (in the instance $\mathcal{I}$) or (2) there is no such test, but we can pay an extra $\min_{i\in s_j} c_i$ to select the lowest cost element in this set\footnote{Since the scenario is identified, we know exactly which element this is.}.

Observe also that in this instance of $\odt$ if we were given the optimal solution, it directly translates to 
\emph{a} solution for $\msscf$ with the same cost, therefore 
\begin{equation}\label{eq:opt_msscf_to_odt}
    c(\pi^*_\mathcal{I}) \leq c(\pi'_{\mathcal{I}'}) = c(\pi^*_{\mathcal{I}'})
\end{equation}

\paragraph{Bounding the cost of the policy.} As we described above the total cost of the policy is 
\begin{align*}
    c(\pi) & \leq c(\pi_{\mathcal{I}'}) + \E{s\in\scenarios}{\min_{i\in s} c_i} \\
    & \leq c(\pi_{\mathcal{I}'}) + c(\pi^*_\mathcal{I}) \\
    &\leq a(n,m)c(\pi^*_{\mathcal{I}'}) + c(\pi^*_\mathcal{I}) \\
    & = (1+a(n,m)) c(\pi^*_\mathcal{I}),
\end{align*}
where in the last inequality we used equation~\eqref{eq:opt_msscf_to_odt}.

Note that for this reduction we did not change the probabilities of the scenarios, therefore if we had started with uniform probabilities and had an oracle to $\udt$, we would still get an $a(n,m)+1$ algorithm for $\umsscf$.
\end{proof}


In the reduction proof of Theorem~\ref{thm:udt_to_pbT}, we are using the following two lemmas, that show
that the policy constructed for $\udt$ via the reduction is feasible and has bounded cost. 

\begin{lemma}\label{lem:udt_to_pbT_feasible}
		Given an instance $\mathcal{I'}$ of $\udt$ and the corresponding
		instance $\mathcal{I}$ of $\umsscf$ in the reduction of
		Theorem~\ref{thm:udt_to_pbT}, the policy $\pi_\mathcal{I'}$ constructed
		for $\udt$ is feasible.
	\end{lemma}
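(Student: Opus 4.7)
The plan is to verify that for every realized scenario $s_i \in \scenarios$, the root-to-leaf path in $\pi_{\mathcal{I}'}$ corresponding to $s_i$ terminates at a leaf where only $s_i$ is consistent with the observed test outcomes, which is exactly the definition of feasibility for a \udtText{} policy.

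First I would exploit the structural observation that in the $\umsscf$ instance $\mathcal{I}$ the isolating element $e_i$ is the unique element belonging to set $s_i$. Since $\pi_\mathcal{I}$ is feasible for $\umsscf$, on every realization of $s_i$ the policy must eventually query $e_i$ and follow its covering branch before terminating; in particular, the covering event itself carries the information that the realized scenario is precisely $s_i$. Then I would spell out the translation from $\pi_\mathcal{I}$ to $\pi_{\mathcal{I}'}$: each test-element query $e_j$ is replaced by the test $T_j$ with matching branching, because the feedback $\infty_{T_j(s)}$ encodes $T_j(s)$ exactly, and each isolating-element query $e_k$ is replaced by a \udtText{} test (or a short chain of tests) separating $s_k$ from the scenarios still consistent at that node. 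The construction then inserts extra distinguishing tests along any branch on which more than two scenarios remain consistent, enforcing the stated "at most two undistinguished" invariant at every leaf.

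Feasibility would then follow by a case split on the leaves of $\pi_{\mathcal{I}'}$: either the corresponding path in $\pi_\mathcal{I}$ ends at a positive cover of some $e_i$, so only $s_i$ can be consistent at that leaf; or the "at most two" invariant leaves exactly two candidate scenarios, in which case the final added distinguishing test in $\pi_{\mathcal{I}'}$ separates them and yields a unique consistent scenario. In either case the leaf identifies a single scenario, and so $\pi_{\mathcal{I}'}$ is feasible for \udtText{}.

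I expect the main obstacle to be formalizing the translation of an isolating-element query, whose $\umsscf$ feedback is merely binary (covers vs.\ does not), into a \udtText{} test that cleanly splits the current consistent set into "is $s_k$" and "is not $s_k$." This requires exhibiting, for any scenario $s_k$ and any other still-consistent scenario, a test in $\mathcal{I}'$ distinguishing them; such a test must exist or the original \udtText{} instance would be ill-posed. Bookkeeping the consistent set along each branch to verify that the "at most two" invariant is maintained throughout the simulation of $\pi_\mathcal{I}$ is the most delicate part of the argument.
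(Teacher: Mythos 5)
Your plan is on the right track at the level of ``show every scenario ends up isolated,'' but the case analysis you sketch has a real gap. You claim that if the path in $\pi_\mathcal{I}$ ends at a positive cover of $e_i$ then ``only $s_i$ can be consistent at that leaf'' of $\pi_{\mathcal{I}'}$. That does not follow: the construction explicitly \emph{ignores} the first isolating element on each branch, so at the moment $\pi_\mathcal{I}$ covers $e_i$, the translated policy $\pi_{\mathcal{I}'}$ has run no test that separates $s_i$ from the other scenarios sharing that history. Covering $e_i$ terminates the $\umsscf$ cost for $s_i$, but it does not terminate the branching in $\pi_{\mathcal{I}'}$; the realized scenario $s_i$ simply keeps following whichever tests the other scenarios trigger. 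Also, the construction does not ``insert extra distinguishing tests along any branch on which more than two scenarios remain consistent'' --- it ignores the first isolating element, replaces each subsequent isolating element $B^j$ with the single cheapest test distinguishing $s_j$ from the previously remembered scenario, and only at the very end (when at most two candidates can remain) appends one final distinguishing test.

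The missing ingredient is the inductive argument the paper uses. Fix $s_i$, and look at the set $S$ of scenarios still consistent when $\pi_\mathcal{I}$ first selects $B^i$. All scenarios in $S \setminus \{s_i\}$ have their isolating elements strictly deeper in the tree, so one does induction on depth to conclude each of them is already isolated by $\pi_{\mathcal{I}'}$. Then one argues: after $\pi_{\mathcal{I}'}$ ignores $B^i$, scenario $s_i$ rides along the deterministic test outcomes until it either separates from all others or lands at a leaf that, by the induction hypothesis, held a unique other scenario --- and the final case of the construction adds the one distinguishing test needed. There is a second subcase (when $B^i$ is \emph{not} the first isolating element on its branch) where the construction replaces $B^i$ with a test splitting $s_i$ from the remembered scenario and the induction is reapplied to the two resulting subtrees. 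Without this induction, the claim ``each leaf has at most two undistinguished scenarios'' is just asserted, not established, and your first case does not close.
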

	\begin{proof}[Proof of Lemma~\ref{lem:udt_to_pbT_feasible}]
		It suffices to show that every scenario is isolated. Fix a scenario $s_i$.
		Observe that $s_i$'s branch has chosen the isolating element $E^i$ in the $\umsscf$ solution, since that is the the only element that belongs to set $s_i$.
  Let $S$ be the set of scenarios 
			just before $E^i$ is chosen and note that by definition $s_i\in S$. 

			If $|S|=1$, then since $\pi_{\mathcal{I}'}$ runs tests giving the
			same branching behavior by definition of $\pi_{\mathcal{I}'}$, and
			$s_i$ is the only scenario left, we have that the branch of
			$\pi_{\mathcal{I}'}$ isolates scenario $s_i$. 

		If $|S|>1$ then all scenarios/sets in $S \setminus \{s_i\}$ are not covered by choosing element $E^i$, therefore they are covered at strictly deeper leaves in the tree. By induction on
	the depth of the tree, we can assume that for each scenario $s_j \in \lp( S
	\setminus \{s_i\} \rp)$ is isolated in $\pi_{\mathcal{I}'}$.
	We distinguish the following cases based on when we encounter
			$E^i$ among the isolating elements on $s_i$'s branch.
	
	\begin{enumerate}
			\item \textbf{$E^i$ was the first isolating element chosen on the
	branch}: then policy $\pi_{\mathcal{I}'}$ ignores element $E^i$.
					Since every leaf holds a unique scenario in $S \setminus
					\{s_i\}$, if we ignore $s_i$ it follows some path of
					tests and either be isolated or end up in a node that
					originally would have had only one scenario, as shown in Figure~\ref{fig:apnx_udt_to_pbT_1}. Since there
					are only two scenarios at that node, policy
					$\pi_{\mathcal{I}'}$ runs the cheapest test
					distinguishing $s_i$ from that scenario. 
		\begin{figure}[H]
		\centering
		\begin{tikzpicture}
\tikzset{My Style/.style={ellipse, draw=black, minimum size=25pt}}

\pgfmathsetmacro{\spacing}{2}
\pgfmathsetmacro{\offsetX}{9}

\node [My Style,label={[yshift=-0.6cm,xshift=1cm]\textcolor{red}{$E^i$}}] (root) at (0,0) {$\mathcal{S}$};

\node[My Style] (left) at (-1,-\spacing) {$s_i$};
\node[My Style] (right) at (1,-\spacing) {$\mathcal{S}\setminus s_i$};

\draw (right) node[]{}
  -- (-1, -3*\spacing) node[]{}
  -- (3,-3*\spacing) node[]{}
  -- (right);

\node[inner sep=0pt] (invisible) at (1, -3*\spacing) {};
\node[My Style] (leaf) at (1, -3.7*\spacing) {$s_{\text{leaf}}$};

\node[My Style] (randj) at (-1.5, -3.7*\spacing) {$s_j$};
\draw[-] (-0.7,-3*\spacing) -- (randj);

\node[] (dots) at (-0.5, -3.7*\spacing) {$\ldots$};
\node[] (dots2) at (2.5, -3.7*\spacing) {$\ldots$};

\node[My Style] (randk) at (3.5, -3.7*\spacing) {$s_k$};
\draw[-] (2.6,-3*\spacing) -- (randk);

\path[draw = red,decorate, decoration=snake] (right)--(invisible);

\draw[-] (left) -- (root);
\draw[-] (invisible) -- (leaf);
\draw[-] (right) -- (root);

		\draw[thick, ->] (3, -1.5) -- (7, -1.5) node[label={[xshift=-2cm]{Ignoring Box $E^i$}}] {};

\node[My Style] (right2) at (1+\offsetX,-\spacing) {$\mathcal{S}$};

\draw (right2) node[]{}
  -- (-1+\offsetX, -3*\spacing) node[]{}
  -- (3+\offsetX,-3*\spacing) node[]{}
  -- (right2);

\node[inner sep=0pt] (invisible2) at (1+\offsetX, -3*\spacing) {};
\node[My Style] (leaf2) at (1+\offsetX, -3.7*\spacing) {$s_{\text{leaf}},$ \textcolor{red}{$s_i$}};

\path[draw = red,decorate, decoration=snake] (right2)--(invisible2);

\node[My Style] (randj2) at (-1.5+\offsetX, -3.7*\spacing) {$s_j$};
\draw[-] (-0.7+\offsetX,-3*\spacing) -- (randj2);

\node[] (dotsb) at (-0.5+\offsetX, -3.7*\spacing) {$\ldots$};
\node[] (dots2b) at (2.5+\offsetX, -3.7*\spacing) {$\ldots$};

\node[My Style] (randk2) at (3.5+\offsetX, -3.7*\spacing) {$s_k$};
\draw[-] (2.6+\offsetX,-3*\spacing) -- (randk2);

\draw[-] (invisible2) -- (leaf2);

\end{tikzpicture}
			\caption{Case 1: $\mathcal{S}$ is the set of scenarios remaining 
					when $E^i$ is chosen, $s_{\text{leaf}}$ is the scenario that $s_i$ ends up with.}
		\label{fig:apnx_udt_to_pbT_1}
	\end{figure}
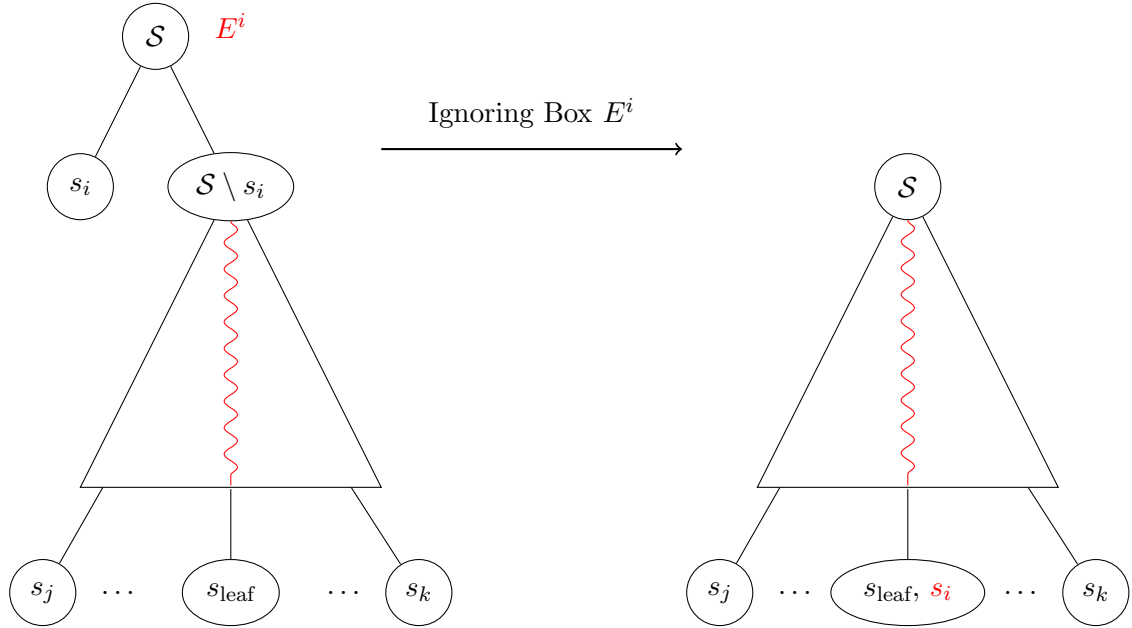    

		\item \textbf{A different element $E^j$ was chosen before $E^i$}:
				by our construction, instead of ignoring $E^i$ we
				now run the cheapest test that distinguishes $s_i$ from $s_j$, causing 
				$i$ and $j$ to go down separate branches, as shown in figure~\ref{fig:apnx_udt_to_pbT_2}. We apply the induction
				hypothesis again to the scenarios in these sub-branches, therefore, both $s_i$ and
				$s_j$ are either isolated or end up in a node with a single
				scenario and then get distinguished by the last case of
					$\pi_{\mathcal{I}'}$'s construction.

			\begin{figure}[h]
				\centering
				\begin{tikzpicture}
\tikzset{My Style/.style={ellipse, draw=black, minimum size=25pt}}

\pgfmathsetmacro{\spacing}{2}
\pgfmathsetmacro{\offsetX}{11}

\node [My Style,label={[yshift=-0.5cm,xshift=1.5cm]\textcolor{red}{$E^j$}}] (root) at (0,0) {$\mathcal{S}\cup s_i$};

\node[My Style] (left) at (-2,-\spacing) {$s_j$};
\node[My Style] (right) at (2,-\spacing) {$\mathcal{S}\cup s_i$};

		\draw (right) node[label={[xshift=0cm, yshift=-2.5cm]$\mathcal{T}$}]{}
  -- (0.5, -3*\spacing) node[]{}
  -- (3.5,-3*\spacing) node[]{}
  -- (right);

 \draw[-] (root) -- (right);
 \draw[-] (root) -- (left);

\draw[thick, ->] (3, -1.5) -- (7, -1.5) node[label={[xshift=-1.5cm, yshift=0cm]Replacing $E^j$ with $T_{i \text{ vs }j}$}] {};

\node [My Style,label={[yshift=-0.5cm,xshift=1.5cm]\textcolor{red}{$T_{i \text{ vs }j}$}}] (root) at (0+\offsetX,0) {$\mathcal{S}\cup s_i$};

\node[My Style] (left2) at (-2+\offsetX,-\spacing) {$\mathcal{S}_1 \cup s_i$};
\node[My Style] (right2) at (2+\offsetX,-\spacing) {$\mathcal{S}_2\cup s_j$};

\draw (right2) node[label={[xshift=0cm, yshift=-2.5cm]$\mathcal{T}$}]{}
  -- (0.5+\offsetX, -3*\spacing) node[]{}
  -- (3.5+\offsetX,-3*\spacing) node[]{}
  -- (right2);

\draw (left2) node[label={[xshift=0cm, yshift=-2.5cm]$\mathcal{T}$}]{}
  -- (-3.5+\offsetX, -3*\spacing) node[]{}
  -- (-0.5+\offsetX,-3*\spacing) node[]{}
  -- (left2);

 \draw[-] (root) -- (right2);
 \draw[-] (root) -- (left2);
\end{tikzpicture}
					\caption{Case 2: run test $T_{i \text{ vs } j}$ to distinguish $s_i$ and $s_j$. Sets $\mathcal{S}_1$ and $
						\mathcal{S}_2$ partition $\mathcal{S}$ } 
					\label{fig:apnx_udt_to_pbT_2}
			\end{figure}
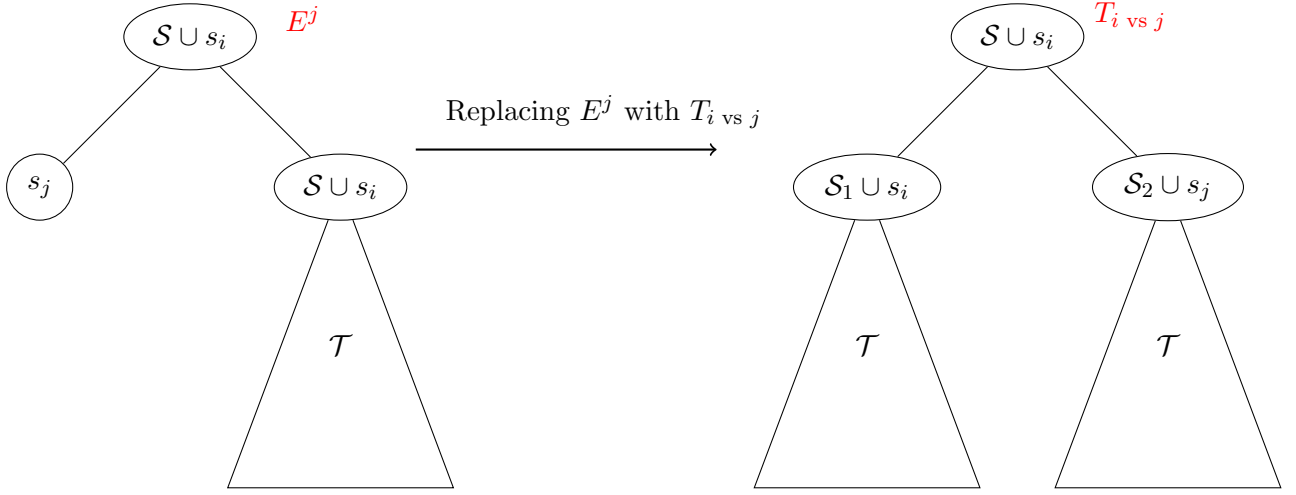
	\end{enumerate}

	Hence, $\pi_{\mathcal{I}'}$ is isolating for any scenario $s_i$.
	Also, notice that any two scenarios that have isolating boxes on the same
	branch will end up in distinct subtrees of the lower node.
	\end{proof}

	\begin{lemma}\label{lem:udt_to_pbT_approx}
		Given an instance $\mathcal{I}$ of $\umsscf$ and an instance
		$\mathcal{I'}$ of $\udt$, in the reduction of
		Theorem~\ref{thm:udt_to_pbT} it holds that 
		\[c(\pi_\mathcal{I'}) \leq 2 c(\pi_\mathcal{I}).
			\]
	\end{lemma}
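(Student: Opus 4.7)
The plan is to establish the bound scenario by scenario: for each $s_i \in \scenarios$ I will show that $c(\pi_{\mathcal{I}'}(s_i)) \leq 2\,c(\pi_\mathcal{I}(s_i))$, and then averaging under the uniform prior yields the lemma. Fix $s_i$ and consider the branch it traces through $\pi_\mathcal{I}$. This branch consists of three kinds of elements: test-elements $e_j$ (each contributing cost $c_j$, identical to the cost of the $\udt$ test $T_j$); intermediate isolating elements $B^k$ for $k \neq i$ (each contributing $c(B^k) = \max_\ell c(k,\ell)$ and merely ruling out $s_k$ without covering $s_i$); and the terminating isolating element $B^i$ (contributing $c(B^i) = \max_\ell c(i,\ell)$ and covering $s_i$'s set). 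In particular, since $B^i$ lies on this branch, $c(B^i) \leq c(\pi_\mathcal{I}(s_i))$.

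The branch for $s_i$ in $\pi_{\mathcal{I}'}$ is built from the $\pi_\mathcal{I}$ branch by the substitution rules implicit in the construction of Theorem~\ref{thm:udt_to_pbT}. Every test-element $e_j$ is kept as the original test $T_j$ at identical cost $c_j$. Every intermediate isolating element $B^k$ with $k \neq i$ is replaced by the cheapest $\udt$ test $T_{i\text{ vs }k}$ distinguishing $s_i$ from $s_k$, at cost $c(i,k) \leq \max_\ell c(i,\ell) = c(B^k)$ by definition of the isolating element's cost. The terminating element $B^i$ is then handled by the two cases of Lemma~\ref{lem:udt_to_pbT_feasible}, each of which appends at most one additional distinguishing test of cost $c(i,\ell) \leq c(B^i)$ to isolate $s_i$ from its unique leaf scenario. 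Summing these contributions along the branch yields
\begin{equation*}
c(\pi_{\mathcal{I}'}(s_i)) \;\leq\; \sum_{e_j \in \pi_\mathcal{I}(s_i)} c_j \;+\; \sum_{\substack{B^k \in \pi_\mathcal{I}(s_i)\\k \neq i}} c(B^k) \;+\; c(B^i) \;+\; c(B^i) \;=\; c(\pi_\mathcal{I}(s_i)) + c(B^i) \;\leq\; 2\,c(\pi_\mathcal{I}(s_i)).
\end{equation*}
Taking the uniform expectation over $s_i \in \scenarios$ then gives $c(\pi_{\mathcal{I}'}) \leq 2\,c(\pi_\mathcal{I})$.

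The main obstacle is in formalizing Case~1 of Lemma~\ref{lem:udt_to_pbT_feasible}, where $B^i$ is the first isolating element on $s_i$'s branch and is simply ignored, so that $s_i$ must descend into a subtree of $\pi_\mathcal{I}$ that it did not originally traverse. To justify folding the cost of this detoured descent into the single extra $c(B^i)$ slack above, I will argue by induction on the depth of the tree that every step of the detour either matches a test-element (kept at equal cost), or trades an isolating element $B^k$ for a cheaper distinguishing test of cost $c(i,k) \leq c(B^k)$, and that the unique terminal distinguishing test against the leaf scenario costs at most $\max_\ell c(i,\ell) = c(B^i)$. The definition $c(B^i) = \max_\ell c(i,\ell)$ is exactly what makes this absorption possible, since any further test required along $s_i$'s detour is of the form $T_{i\text{ vs }\ell}$ for some $\ell$ and thus covered by this maximum.
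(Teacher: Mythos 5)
Your plan hinges on the per-scenario inequality $c(\pi_{\mathcal{I}'}(s_i)) \leq 2\,c(\pi_\mathcal{I}(s_i))$, and that inequality is false. The problematic case is exactly the one you flag at the end: when $B^i$ is the \emph{first} isolating element on $s_i$'s $\umsscf$ branch, $\pi_{\mathcal{I}'}$ ignores it and $s_i$ then descends into the subtree of $\pi_\mathcal{I}$ rooted at $B^i$'s ``not covered'' branch, running all the test elements along that detour. Those test elements were never on $s_i$'s $\umsscf$ branch (whose cost stops at $B^i$), so their total cost is not bounded by anything that appears in $c(\pi_\mathcal{I}(s_i))$, and in particular cannot be ``folded into the single extra $c(B^i)$ slack.'' A concrete instance: two scenarios, $\pi_\mathcal{I}$ opens $B^1$ at the root, then opens $k$ test elements of cost $C$ each followed by $B^2$; here $c(\pi_\mathcal{I}(s_1)) = c(B^1)$ while $c(\pi_{\mathcal{I}'}(s_1))$ is at least $kC$, so no constant multiple of $c(\pi_\mathcal{I}(s_1))$ bounds it.

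What actually closes the argument (and what the paper does, if tersely) is a pairing charge rather than a per-scenario bound. When $B^i$ is ignored, $s_i$ ends at the leaf of $\pi_{\mathcal{I}'}$ occupied by some deeper scenario $s_{\text{leaf}}$, and one shows $c(\pi_{\mathcal{I}'}(s_i)) \leq c(\pi_\mathcal{I}(s_{\text{leaf}}))$ and similarly $c(\pi_{\mathcal{I}'}(s_{\text{leaf}})) \leq c(\pi_\mathcal{I}(s_{\text{leaf}}))$, since $s_{\text{leaf}}$'s $\umsscf$ branch passes through $B^i$, then through all the detour tests, and ends at $B^{\text{leaf}}$, so $c(\pi_\mathcal{I}(s_{\text{leaf}})) \geq c_1 + c(B^i) + c_2$ while each of $c(\pi_{\mathcal{I}'}(s_i))$, $c(\pi_{\mathcal{I}'}(s_{\text{leaf}}))$ is at most $c_1 + c_2 + c(T_{s_i \text{ vs } s_{\text{leaf}}}) \leq c_1 + c_2 + c(B^i)$. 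Since each leaf of $\pi_{\mathcal{I}'}$ hosts at most two scenarios and distinct leaves have distinct $s_{\text{leaf}}$'s, summing over leaves gives $c(\pi_{\mathcal{I}'}) \leq 2\,c(\pi_\mathcal{I})$. This is a genuinely different decomposition: you need to charge to the \emph{leaf} scenario, not to $s_i$ itself, and then exploit that the charging is essentially injective.

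Two smaller issues in your write-up: (i) you describe every intermediate isolating element $B^k$ as being replaced by $T_{i \text{ vs } k}$, but the construction replaces $B^k$ by the cheapest test distinguishing $s_k$ from the \emph{previously seen} isolating scenario on the branch (which need not be $s_i$), and the \emph{first} isolating element is ignored outright rather than substituted; (ii) the chain ``$c(i,k) \leq \max_\ell c(i,\ell) = c(B^k)$'' is incorrect since $\max_\ell c(i,\ell) = c(B^i)$, though the intended conclusion $c(i,k) = c(k,i) \leq \max_\ell c(k,\ell) = c(B^k)$ is fine.
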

	\begin{proof}[Proof of Lemma~\ref{lem:udt_to_pbT_approx}]
		Let $s_i$ be any scenario in $\mathcal{S}$. We use induction on the
			number of isolating boxes along $s_i$'s branch in $\mathcal{I}'$.
			Initially observe that $E^i$ will always exist
			in $s_i$'s branch, in any feasible solution to $\mathcal{I}$. We
			use $c(E^j)$ and $c(T_k)$ to denote the costs of box $E^j$ and test
			$T_k$, for any $k\in[n]$ and $j\in [n+m]$.

		\begin{enumerate}
			\item \textbf{Only $E^i$ is on the branch}: since $E^i$ will be
				ignored, we end up with $s_i$ and some other not yet isolated scenario, let $s_{\text{leaf}}$ be that scenario. To
				isolate $s_i$ and $s_{\text{leaf}}$ we run the cheapest test that
				distinguishes between these. From the definition of the cost of
						$E^i$ we know that $c(T_{s_i \text{ vs }
						s_{\text{leaf}}}) \leq c(E^i)$. Additionally, since
						$c(s_i) \leq c(s_{\text{leaf}})$ and both $s_{\text{leaf}}$ and $s_i$ have probability $1/m$, overall we have
						$c(\pi_\mathcal{I}) \leq 2c(\pi_\mathcal{I'})$. This is also shown in Figure~\ref{fig:apnx_udt_to_pbT_1}.

			\item \textbf{More than one isolating elements are on the
					branch}: similarly, observe
						that for any extra isolating element $E^j$ we encounter, we
						substitute it with a test that distinguishes between
						$s_i$ and $s_j$ and costs at most $c(E^j)$. Given that $c(s_i) \leq c(s_{\text{leaf}})$ and scenarios are uniform, we again have 
						$c(\pi_\mathcal{I}) \leq 2c(\pi_\mathcal{I'})$.
		\end{enumerate}
 
	\end{proof}

\udtTopbT*
	\begin{proof}[Proof of Theorem~\ref{thm:udt_to_pbT}]
 We begin by giving the construction of the policy in the reduction, and showing the final approximation ratio. 
	\paragraph{Constructing the policy.} Given a policy
	$\pi_{\mathcal{I}}$ for the instance of $\umsscf$, we construct a policy
	$\pi_{\mathcal{I}'}$. For any test element
	$B_j$ that $\pi_{\mathcal{I}}$ selects, $\pi_{\mathcal{I}'}$ runs the
	equivalent test $T_j$. For the \emph{isolating elements} $E^i$ we distinguish the following cases.
	
	\begin{enumerate}
		\item If $\pi_{\mathcal{I}}$ selects an \emph{isolating element} $E^i$ for the first time on the current branch, then $\pi_{\mathcal{I}'}$ ignores
				this element but remembers the set/scenario $s_i$, which $E^i$ belonged to.
		\item If $\pi_{\mathcal{I}}$ selects another \emph{isolating element} $E^j$
				after some $E^i$ on the branch, then $\pi_{\mathcal{I}'}$ runs the minimum
				cost test that distinguishes scenario $s_j$ from $s_k$ where $E^k$ was the most recent \emph{isolating element} chosen on this branch prior
				to $E^j$. 
		\item If we are at the end of $\pi_{\mathcal{I}}$, there can be at most
				$2$ scenarios remaining on the branch, so $\pi_{\mathcal{I}'}$
				 runs the minimum cost test that distinguishes these two
				scenarios.
	\end{enumerate}
	By Lemma~\ref{lem:udt_to_pbT_feasible}, we have that the above policy is feasible for $\udt$.

	\paragraph{Approximation ratio.} From Lemma~\ref{lem:udt_to_pbT_approx} we
	have that $c(\pi_\mathcal{I'}) \leq 2 c(\pi_\mathcal{I})$. For
	the optimal policy, we have that $c(\pi^*_{\mathcal{I}}) \leq
	3c(\pi^*_{\mathcal{I}'})$. This holds since if we have an optimal
	solution to $\udt$, we can add an \emph{isolating element} at every leaf to make it
	feasible for $\umsscf$, by only increasing the cost by a factor of
	$3$\footnote{This is because for every two scenarios, the $\udt$ solution
			must distinguish between them, but one of these scenarios is the
	$\max$ scenario from the definition of $T_j$, for which we pay less than
	$T_j$}, which means that $c(\pi^*_{\mathcal{I}})$ will be less than this
	transformed $\umsscf$ solution. Overall, if $\pi_{\mathcal{I}}$ is computed
	from an $\alpha(n,m)$-approximation for $\umsscf$, we have \[
    c(\pi_{\mathcal{I}'})
	\leq 2 c(\pi_{\mathcal{I}}) \leq 2 \alpha(n+m,m) c(\pi_{\mathcal{I}}^*) \leq  6
	\alpha(n+m,m) c(\pi_{\mathcal{I}'}^*) \] 
 

\end{proof}

\section{Proofs from Section~\ref{sec:mixt}}\label{sec:apn_mixt}
\dpTests*
	\begin{proof}
		Let $s_1, s_2\in \scenarios$ be any two scenarios in the instance of
		$\pbv$ and let $v_i$ be the value returned by opening the $i$'th
		informative box, which has distributions $\dsa$ and $\dsb$ for
		scenarios $s_1$ and $s_2$ respectively. Then by the definition of
		informative boxes for every	such box opened, there is a set of values
		$v$ for which $\Pr{\dsa}{v} \geq \Pr{\dsb}{v}$  and a set for which the
		reverse holds. Denote these sets by $M_{i}^{s_1}$ and $M_{i}^{s_2}$
		respectively. We also define the indicator variables $X_i^{s_1} =
		\ind{v_i \in M_i^{s_1}}$. 
		Define $\overline{X} = \sum_{i\in [k]}X_i^{s_1}/k$, and observe that
		$\E{}{\overline{X}| s_1} = \sum_{i\in [k]} \Pr{}{M_i^{s_1}}/k$.
		Since for every box we have an $\e$ gap in TV distance between the scenarios $s_1, s_2$ we have that 
			\[
			\lp| \E{}{\overline{X}|s_1} - \E{}{\overline{X}|s_2} \rp| \geq \e,
			\]
			 therefore if $\lp| \overline{X} - \E{}{\overline{X}|s_1}\rp| \leq \e/2$ we conclude that scenario $s_2$ is
		eliminated, otherwise we eliminate scenario $s_1$.  
		The probability of error is $\Pr{\dsa}{\overline{X} - \E{}{\overline{X}|s_1} >
		\e/2} \leq e^{-2k(\e/2)^2}$, where we used Hoeffding's inequality since $X_i \in
		\{0,1\}$. Since we want the probability of error to be less than
		$\delta$, we need to open $O\lp( \frac{\log 1/\delta}{\e^2} \rp)$ informative
		boxes.
	\end{proof}

		\begin{proof}[Proof of Theorem~\ref{thm:dp}]
		We describe how to bound the final cost, and calculate the runtime of
		the DP. Denote by $L  = m^2/\e^2 \log 1/\delta$ where we show that in
		order to get $(1+\beta)$-approximation we set $\delta = \frac{\beta
				c_{\min}}{m^2 T}$.

		\paragraph{Cost of the final solution.}	Observe that the only case where the
		DP limits the search space is when $|S|=1$. If the scenario is identified
		correctly, the DP finds the optimal solution by
		running the greedy order; every time choosing the box with the highest
		probability of a value below $T$\footnote{When there is only one
		scenario, this is exactly Weitzman's algorithm.}. 

		In order to eliminate all scenarios but one, we should eliminate all
		but one of the $m^2$ pairs in the list $E$. From
		Lemma~\ref{lem:dp_tests}, and a union bound on all $m^2$ pairs, the
		probability of the last scenario being the wrong one is at most
		$m^2\delta$. By setting $\delta = \beta c_{\min}/(m^2T)$, we get that
		the probability of error is at most $\beta c_{\min}/T$, in which case
		we pay at most $T$, therefore getting an extra $\beta c_{\min} \leq \beta	c(\pi^*)$
		factor.
		
		\paragraph{Runtime.} The DP maintains a list $M$ of sets  of
		informative boxes opened, and numbers of non informative ones. Recall
		that $M$ has the following form $M = S_1|x_1 |S_2| x_2| \ldots |S_k
		|x_k$, where $k\leq L$ from Lemma~\ref{lem:dp_tests} and the fact that there are $m^2$ pairs in $E$. There are in
		total $n$ boxes, and $L$ ``positions" for them, therefore the size of
		the state space is ${n \choose L} = O(n^L)$. There is also an extra $n$
		factor for searching in the list of informative boxes at every step of
		the recursion. Observe that the numbers of non-informative boxes also
		add a factor of at most $n$ in the state space.  The list $E$ adds
			another factor at most $n^{m^2}$, and the list $S$ a factor of $ 2^m$
			making the total runtime to be $ n^{\tilde{O}(m^2/\e^2)}$.
	\end{proof}

\section{Boxes with Non-Unit Costs: Revisiting our Results}\label{apn:general_costs}
In the original \pbText{} problem, denoted by $\pbvc$, each box $i$ has a different known cost $c_i>0$. Similarly we denote the non-unit cost version of both decision tree-like problems and Min Sum Set Cover-like problems by adding a superscript ${}^c$ to the problem name. Specifically, we now define $\odtc$, $\udtc$, $\msscfc$ and $\umsscfc$, where the tests (elements) have non-unit cost for the decision tree (min sum set cover) problems. We revisit our results and describe how our reductions change to incorporate non-unit cost boxes (summary in Figure~\ref{fig:summary_costs}).

\begin{figure}[H]
    \centering
    \newcommand{\bl}[1]{\textcolor{black}{#1}}
\pgfmathsetmacro{\dist}{3}
\pgfmathsetmacro{\distV}{1.9}
\pgfmathsetmacro{\offset}{1}
\pgfmathsetmacro{\ypomnimaX}{6}
\pgfmathsetmacro{\ypomnimaY}{1.8}
\begin{tikzpicture}

	\node (pbv) at (0,0){{\Large $\pbvc$}};

	\node (umsscf) at (2*\offset,2*\distV){{\Large $\umsscfc$}};
 	\node (msscf) at (-2*\offset,2*\distV){{\Large $\msscfc$}};

	\node (udt) at (2*\offset, 3*\distV){{\Large $\udtc$}};
 	\node (odt) at (-2*\offset, 4*\distV){{\Large $\odtc$}};
	\draw[->,dotted, gray, thick, opacity=0.6] (umsscf) -- (msscf);

\draw[->,dotted, gray, thick, opacity=0.6] (udt) --  (odt);

    \draw[->, dashed, thick, gray] (msscf) to [out=180, in=180]  node [above, left] {Claim~\ref{cl:msscf_to_pb}}  (pbv);
 
    \draw[->,dashed, thick, gray] (msscf) -- node [above, left] {Claim~\ref{thm:msscf_to_odt}} (odt); 
    
    \draw[->, dashed, thick, gray] (umsscf) to [out=135,in=180]  node [above, left] {Claim~\ref{thm:msscf_to_odt}}  (udt);

    \draw[->,ultra thick, dashed] (pbv) -- node [above, right, text width=2cm] {Thm~\ref{thm:pb_to_mssc}}(umsscf);
        
    \draw[->,very thick] (udt) to [in=45, out=0] node [above, right] {\bl{\textbf{Cor}~}\ref{cor:udt_to_pbT_costs}}  (umsscf); 
    
     \draw[->,ultra thick, dashed] (pbv) -- node [label={[xshift=-0.85cm, yshift=-0.45cm]Thm~\ref{thm:pb_to_mssc}}] {} (msscf);

    \draw[->,very thick, dashed] (\ypomnimaX-1, \ypomnimaY+0.8) -- node[above] {{\footnotesize Main Theorem ($\log$ factors)}} (\ypomnimaX+1, \ypomnimaY+0.8);
    
        \draw[->,thick] (\ypomnimaX-1, \ypomnimaY) -- node[above] {{\footnotesize Main Theorem (const. factors)}} (\ypomnimaX+1, \ypomnimaY);
        
    \draw[->,dashed,thick, gray] (\ypomnimaX-1, \ypomnimaY-0.8) -- node[above] {{\footnotesize Minor Claim}} (\ypomnimaX+1, \ypomnimaY-0.8);
    
	  \draw[->,dotted, gray, thick, opacity=0.6] (\ypomnimaX-1, \ypomnimaY-1.6) -- node[above] {{\footnotesize Subproblem}} (\ypomnimaX+1, \ypomnimaY-1.6);

\end{tikzpicture}
    \caption{Summary of all the reductions with non-unit costs. The only result that needs a changed proof is Corollary~\ref{cor:udt_to_pbT_costs} highlighted in bold (previously Theorem~\ref{thm:udt_to_pbT}).}
    \label{fig:summary_costs}
\end{figure}

Note also, that even though the known results for \dtText{} (e.g. \cite{GuilBilm2009,GuptNagaRavi2017}) handle non-unit test costs, the currently known works for \udtText{} do not. If however there is an algorithm for \udtText{} with non-unit costs, our reductions will handle this obtaining the same approximation guarantees.

\subsection{Connecting Pandora's Box and $\msscf$}

\begin{figure}[H]
    \centering
    \newcommand{\bl}[1]{\textcolor{black}{#1}}
\pgfmathsetmacro{\dist}{3}
\pgfmathsetmacro{\distV}{2.5}
\pgfmathsetmacro{\offset}{1.5}
\pgfmathsetmacro{\ypomnimaX}{6}
\pgfmathsetmacro{\ypomnimaY}{1.8}
\begin{tikzpicture}

	\node (pbv) at (0,0){{\Large $\pbvc$}};

	\node (upbT) at (\offset,\distV){{\Large $\upbTc$}};
 	\node (pbT) at (-\offset,\distV){{\Large $\pbTc$}};
  
	\node (umsscf) at (1.5*\offset,2*\distV){{\Large $\umsscfc$}};
 
 	\node (msscf) at (-1.5*\offset,2*\distV){{\Large $\msscfc$}};

\draw[->,dotted, gray, thick, opacity=0.6] (umsscf) -- (msscf);
   \draw[->,dotted, gray, thick, opacity=0.6] (upbT) --  (pbT);

\draw[->,very thick, dashed] (pbv) -- node [above, left] {\bl{\textbf{Lem}~}\ref{thm:pbv_to_pb0_log}} (pbT);

\draw[->,very thick, dashed] (pbv) -- node [above, right] {\bl{\textbf{Lem}~}\ref{thm:pbv_to_upb0_log}}(upbT);

 \draw[->, thick] (pbT) -- node [above, left] {Claim~\ref{thm:upbT_to_umsscf}} (msscf);

        \draw[->,thick] (upbT) -- node [above, right] { \bl{Claim~}\ref{thm:upbT_to_umsscf}}(umsscf);

    \draw[->,dashed, gray] (msscf) to [out=180, in=180]  node [above, left] {Claim~\ref{cl:msscf_to_pb}}  (pbv);


    \draw[->,very thick, dashed] (\ypomnimaX-1, \ypomnimaY+0.8) -- node[above] {{\footnotesize Main Lemma ($\log$ factors)}} (\ypomnimaX+1, \ypomnimaY+0.8);
    
        \draw[->,thick] (\ypomnimaX-1, \ypomnimaY) -- node[above] {{\footnotesize Claim (const. factors)}} (\ypomnimaX+1, \ypomnimaY);
        
    \draw[->,dashed,thick, gray] (\ypomnimaX-1, \ypomnimaY-0.8) -- node[above] {{\footnotesize Minor Claim}} (\ypomnimaX+1, \ypomnimaY-0.8);
    
	  \draw[->,dotted, gray, thick, opacity=0.6] (\ypomnimaX-1, \ypomnimaY-1.6) -- node[above] {{\footnotesize Subproblem}} (\ypomnimaX+1, \ypomnimaY-1.6);

\end{tikzpicture}
    \caption{Reductions shown in this section. The solid lines are part of Corollary~\ref{cor:pb_to_mssc_costs}.}
    \label{fig:pb_and_mssc_costs}
\end{figure}

All the results of this section hold as they are when we change all versions to incorporate costs. We did not use the fact that the costs are unit in any of the proofs of Claim~\ref{cl:msscf_to_pb}, Claim~\ref{thm:upbT_to_umsscf} or Lemmas~\ref{thm:pbv_to_pb0_log}, \ref{thm:pbv_to_upb0_log}. We formally restate the main theorem of Section~\ref{sec:pb_to_mssc} as the following corollary, where the only change is that it now holds for the cost versions of the problems.

\begin{corollary}[\pbText{} to $\msscf$ with non-unit costs] \label{cor:pb_to_mssc_costs}
If there exists an $a(n,m)$ approximation algorithm for $\msscfc$ then there exists a $O(\alpha(n+m, m^2) \log \alpha(n+m, m^2))$-approximation for $\pbvc$. The same result holds if the initial algorithm is for $\umsscfc$.
\end{corollary}

\subsection{Connecting $\msscf$ and Optimal Decision Tree}
In this section the reduction of Theorem~\ref{thm:udt_to_pbT} uses the fact that the costs are uniform. However we can easily circumvent this and obtain corollary~\ref{cor:udt_to_pbT_costs}. Using this, the results for the non-unit costs versions are summarized in Figure~\ref{fig:mssc_and_odt_costs}.

\begin{figure}[H]
    \centering
    \newcommand{\bl}[1]{\textcolor{black}{#1}}
\pgfmathsetmacro{\dist}{3}
\pgfmathsetmacro{\distV}{1.9}
\pgfmathsetmacro{\offset}{1}
\pgfmathsetmacro{\ypomnimaX}{6}
\pgfmathsetmacro{\ypomnimaY}{7}
\begin{tikzpicture}

	\node (umsscfc) at (2*\offset,2*\distV){{\Large $\umsscfc$}};
 	\node (msscfc) at (-2*\offset,2*\distV){{\Large $\msscfc$}};

	\node (udtc) at (2*\offset, 3*\distV){{\Large $\udtc$}};
 	\node (odtc) at (-2*\offset, 4*\distV){{\Large $\odtc$}};
	\draw[->,dotted, gray, thick, opacity=0.6] (umsscfc) -- (msscfc);
 
\draw[->,dotted, gray, thick, opacity=0.6] (udtc) --  (odtc);

 
    \draw[->,dashed, thick, gray] (msscfc) -- node [above, left] {Claim~\ref{thm:msscf_to_odt}} (odtc); 
    
    \draw[->, dashed, thick, gray] (umsscfc) to [out=135,in=180]  node [above, left] {Claim~\ref{thm:msscf_to_odt}} (udtc);

    \draw[->,very thick] (udtc) to [in=45, out=0] node [above, right] {\bl{Cor~}\ref{cor:udt_to_pbT_costs}}  (umsscfc);

    
        \draw[->,thick] (\ypomnimaX-1, \ypomnimaY) -- node[above] {{\footnotesize Main Theorem (const. factors)}} (\ypomnimaX+1, \ypomnimaY);
        
    \draw[->,dashed,thick, gray] (\ypomnimaX-1, \ypomnimaY-0.8) -- node[above] {{\footnotesize Minor Claim}} (\ypomnimaX+1, \ypomnimaY-0.8);
    
	  \draw[->,dotted, gray, thick, opacity=0.6] (\ypomnimaX-1, \ypomnimaY-1.6) -- node[above] {{\footnotesize Subproblem}} (\ypomnimaX+1, \ypomnimaY-1.6);

\end{tikzpicture}
    \caption{Summary of reductions for non unit cost boxes.}
    \label{fig:mssc_and_odt_costs}
\end{figure}

\begin{corollary}[\udtText{} with costs to $\umsscfc$]\label{cor:udt_to_pbT_costs}
Given an $\alpha(m,n)$-approximation algorithm for  $\umsscfc$ then there exists an
	$O(\alpha(n+m, m))$-approximation algorithm for $\udtc$.
\end{corollary}
\begin{proof}
    The proof follows exactly the same way as the proof of Theorem~\ref{thm:udt_to_pbT} with one change: the cost of an isolating element is the minimum cost test needed to isolate $s_i$ from scenario $s_k$ where $s_k$ is the scenario that maximizes this quantity. Formally, if $c(i,k) = \min\{c_j | T_j(i) \not = T_j(k)\}$, then $c(B^i) = \max_{k \in [m]} c(i,k)$. The reduction follows the exact steps as the one we described in Section~\ref{sec:apn_mssc_and_odt}.
\end{proof}


\end{document}